\theoremstyle{plain}
\newtheorem {proposition} [theorem] {Proposition}
\begin{document}

\title{Munchausen Iteration}

\author{Roland Meyer}
\author{Sebastian Muskalla}

\affil
{
    TU Kaiserslautern, \texttt{\{meyer, muskalla\}@cs.uni-kl.de}
}

\authorrunning{R. Meyer, and S. Muskalla} 

\Copyright{Roland Meyer, and Sebastian Muskalla}


\maketitle

\begin{abstract}
We present a method for solving polynomial equations over idempotent $\omega$-continuous semirings. 
The idea is to iterate over the semiring of functions rather than the semiring of interest, and only evaluate when needed. 
The key operation is substitution.    
In the initial step, we compute a linear completion of the system of equations that exhaustively inserts the equations into one another.  
With functions as approximants, the following steps insert the current approximant into itself. 
Since the iteration improves its precision by substitution rather than computation we named it Munchausen, after the fictional baron that pulled himself out of a swamp by his own hair. 
The first result shows that an evaluation of the $\nth{n}$ Munchausen approximant coincides with the $\nth{2^n}$ Newton approximant. 
Second, we show how to compute linear completions with standard techniques from automata theory.
In particular, we are not bound to (but can use) the notion of differentials prominent in Newton iteration. 
\end{abstract}


\section{Introduction}
Verification problems pop up in a fascinating variety of applications.  
Despite this variety, they are often formulated in a uniform way, as finding the least solution to a system of polynomial equations that is interpreted over a semiring.   
The verification tasks that can be captured this way range from language-theoretic problems underlying model checking~\cite{HeizmannHoenickePodelski2010} to dataflow analyses~\cite{Khedker2009} needed in compilers~\cite{SeidlWilhelmHack2012}.  
The programs that can be handled may involve recursion~\cite{SharirPnueli1978,RepsHorwitzSagiv1995} and weak forms of parallelism~\cite{QadeerRehof2005,HolikMeyer2015}.  
Technically, the system of equations captures the flow of control in the program of interest. 
The semiring interpretation models the aspects of the program semantics that influence the verification task. 
The least solution is the most precise semantic information (of the form one has chosen to track) that is invariant under the program commands. 

Computing the least solution to a given system of equations is an algorithmic challenge, commonly referred to as state space explosion.   
The least solution is the least fixed point of the right-hand side functions. 
The predominant method for computing this fixed point is Kleene iteration. 
In its plain form, Kleene iteration understands the right-hand side functions as a single function over the product domain and applies it over and over again until the least fixed point is reached. 
The practical importance of Kleene iteration stems from the fact that it is amenable to algorithmic optimizations. 
In particular, rather than working on the product domain, implementations use a worklist that only stores the functions whose variables have received updates~\cite{ChaoticIteration1977,NNH1999}. 

Recently, Esparza et al. proposed Newton iteration~\cite{EsparzaKieferLuttenberger2010}, a new method for computing least fixed points that combines Kleene iteration with an acceleration principle (see also~\cite{HopkinsKozen1999,EtessamiYannakakis2009} for precursors of the method).
The idea is indeed inspired by the method for finding roots of numerical functions. 
The current approximant is not only modified by an application of the function, like in Kleene iteration, but in addition shifted towards the fixed point by an acceleration that makes use of the function's differential. 
Newton iteration is guaranteed to converge to the least fixed point, and to do so faster than Kleene iteration (there are even cases where Newton reaches the fixed point while Kleene does not).
On the downside, the Newton steps are computationally more expensive than Kleene steps.
In particular, Newton iteration does not yet decompose into a worklist procedure. 
There has, however, been recent interest in the algorithmics of the method~\cite{FPSolve2014,Reps2016}.

Our contribution is a new iteration scheme for solving systems of polynomial equations $\vec x = \vec f(\vec x)+\vec a$ over semirings.
To be precise, we work with idempotent and $\omega$-continuous semirings, assumptions that are typically met in verification.
Our iteration is exponentially faster than Newton and (arguably) easier to compute. 
To explain the idea, note that both, Kleene and Newton, compute in the semiring of interest. 
Our method works symbolically, over the semiring of functions.
The key idea is substitution. 
In the initial step, we compute a so-called linear completion of the system of equations, $\munchit{0} \ =\ \completionof{\vec f}$. The completion exhaustively inserts the right-hand side functions into each other.
This means an occurrence of variable $y$ in $\vec f_x$ is replaced by $\vec f_y$.
For the resulting function to remain representable, we restrict ourselves to a completion process that is linear in the sense that the next replacement will be done on a variable in $\vec f_y$.
In the iteration step, we continue to insert the current approximant into itself, $\munchit{n+1}\ =\ \evalof{\munchit{n}}{\munchit{n}}$. 
We obtain semantic and algorithmic results about Munchausen iteration.

Concerning the semantics, we show that the Munchausen sequence is faster than the Newton sequence: When we evaluate the $\nth{n}$ Munchausen approximant at the constant vector $\vec a$, we obtain the $\nth{2^n}$ Newton approximant. 
This precise correspondence allows us to transfer deep results from~\cite{EsparzaKieferLuttenberger2010}: The Munchausen sequence converges to the least fixed point and, in the commutative case, is guaranteed to reach the least fixed point in a number of steps that is logarithmic in the number of variables. 
As second main result, we show that there is some flexibility in where to evaluate the approximants.
Any $\vec b$ chosen between $\vec a$ and the least fixed point will guarantee convergence to the least fixed point.
Munchausen iteration thus combines well with further accelerations. 

Concerning the algorithmics, we study the operations of linear completion $\munchit{0} \ =\ \completionof{\vec f}$ and evaluation $\munchit{n+1}\ =\ \evalof{\munchit{n}}{\munchit{n}}$
We show that the linear completion of a function is a linear context-free language. 
Moreover, this language can be represented symbolically by a regular expression, provided the semiring has associated a suitable tensor operation. 
The idea of using tensors was introduced recently in the context of Newton iteration~\cite{Reps2016}.
Our contribution is to lift it to the semiring of functions.
As a second result, we show how to compute the evaluation on the symbolic representation of the linear completion (a linear context-free grammar or even a regular expression). 
The main finding is that the iteration steps are well-behaved to an extent that we can implement them by an indexed language.\\[0.2cm]
{\bf \sffamily Outline}\quad Section~\ref{Section:Preliminaries} introduces the basics on semirings.
The Munchausen iteration scheme is presented in Section~\ref{Section:Acceleration}, together with the study of its semantic properties.
The algorithmics of our iteration is the subject of Section~\ref{Section:Algorithmics}.
Section~\ref{Section:Discussion} concludes the paper with a discussion on the implementation of the method.\\[0.2cm]
Proofs missing in the paper can be found in the appendix.


\section{Systems of Polynomial Equations over $\omega$-Continuous Semirings}\label{Section:Preliminaries}

We study systems of polynomial equations over idempotent and $\omega$-continuous semirings.
A \bfemph{semiring} $\sr$ is a tuple $(\sr, \sradd, \srmult, 0, 1)$ with the following properties (for all $a, b, c\in \sr$):
\begin{align*}
    (\sr, \sradd, 0)
    \text{ is a commutative monoid}
    \quad\quad\quad\quad
    & (\sr, \srmult, 1)
    \text{ is a monoid}
\end{align*}
\vspace*{-1cm}
\begin{align*}
    a\srmult (b\sradd c)
    &= (a\srmult b) \sradd (a\srmult c)
    & (b\sradd c) \srmult a
    &= (b \srmult a) \sradd (c\srmult a)\\
    a\srmult 0 &= 0\srmult a = 0  \ .
\end{align*}

A semiring comes with the so-called \bfemph{natural ordering} $\leq$ on its elements:  $a\leq a\sradd b$ for all $a, b\in \sr$. 
The semiring is called naturally ordered if $\leq$ is a partial order. 
In the following, we will determine suprema over sets of semiring elements.
These suprema will always be taken wrt. the natural ordering.

A naturally ordered semiring is \bfemph{$\omega$-continuous}
if it satisfies the following Properties~(1) to (4).
Property~(1) requires chains $(a_i)_{i\in \nat}$ to have a supremum $\sup\setcond{a_i}{i\in\nat}$ in $\sr$. 
Recall that a chain is a sequence with $a_i\leq a_{i+1}$ for all $i\in\nat$.
To state the Properties~(2) to~(4), given a sequence $(a_i)_{i\in \nat}$ in $\sr$ we define the infinite sum
\begin{align*}
    \bigsradd_{i\in \nat}a_i\ =\ \sup\setcond{a_0\sradd\ldots\sradd a_i}{i\in \nat}\ .
\end{align*}
Note that the sum exists by Property~(1). 
The Properties (2) to (4) now require
\begin{align*}
    c\srmult (\bigsradd_{i\in \nat}a_i) = \bigsradd_{i\in \nat} (c\srmult a_i)\qquad  (\bigsradd_{i\in \nat}a_i)\srmult c = \bigsradd_{i\in \nat} (a_i\srmult c)\qquad\bigsradd_{i\in \nat}a_i = \bigsradd_{i\in I}\bigsradd_{j\in J_i} a_j\ .
\end{align*}
The latter equality is supposed to hold for every partitioning 
of the natural numbers. 
The requirement for $\omega$-continuity allows us to define the \bfemph{Kleene-star} operator $^*:\sr\rightarrow \sr$ by $a^*=\bigsradd_{i\in \nat}a^i$ where we set $a^0=1$.

Throughout, we will work with idempotent and $\omega$-continuous semirings, \bfemph{io-semirings} for short.
A semiring is \bfemph{idempotent} if addition is idempotent, $a\sradd a=a$ for all $a\in \sr$. 
We will also consider the special case of \bfemph{commutative} io-semirings, where besides addition also  multiplication is commutative.  

Given a finite set of variables $\vars=\set{x_1,\ldots, x_k}$, a \bfemph{monomial} is an expression of the form
$m = a_1 \srmult x_{i_1} \srmult  \ldots \srmult a_l \srmult x_{i_l} \srmult a_{l+1}$,
where all $a_i \in \sr$ and $x_{i_j} \in \vars$.
A monomial without variables is a {\bf constant}. 
A \bfemph{polynomial} is a finite sum of monomials, $p = \bigsradd_{i = 1, \ldots, k} m_i$. 
We use $\polysr$ for the set of all polynomials.
A \bfemph{power series} is a countable sum of monomials.

The set of functions over $\sr$ with arguments $\vars$, denoted by $\fctring$, forms a semiring with element-wise addition and multiplication, i.e. for all $\vec a\in \sr^\vars$:
\begin{align*}
    (f \sradd g) (\vec a) = f(\vec a) \sradd g(\vec a)\qquad(f \srmult g) (\vec a) = f(\vec a) \srmult g(\vec a)\ .
\end{align*}
The \bfemph{semiring of functions} $\fctring$ is $\omega$-continuous (idempotent, commutative) if and only if $\sr$ is  $\omega$-continuous (idempotent, commutative).
We will also consider $\vars$-dimensional vectors of such functions, i.e. the set $(\fctring)^\vars$.
We can see such a vector as a single function $\sr^\vars \to \sr^\vars$.
Together with the component-wise operations, the set of functions $\sr^\vars \to \sr^\vars$ again forms a semiring, and the properties of $\sr$ carry over.
We are particularly interested in functions defined by polynomials.
Note that they are \bfemph{monotone}. 

Given a single function $f : \sr^\vars \to \sr $ and a vector of functions $\vec v : \sr^\vars \to \sr^\vars$, we write $\evalof{\vec v}{f}$ for the composition of $f$ and $\vec v$ defined by
\begin{align*}
    \evalof{\vec v}{f} : \sr^\vars \to \sr\qquad
    \evalof{\vec v}{f} (\vec a) = f( \vecat{v}{x_1} (\vec a), ..., \vecat{v}{x_k} (\vec a) ) \ .
\end{align*}
We also use evaluation on vectors of functions, 
$(\evalof{\vec v}{\vec f})_{x} = \evalof{\vec v}{\vec f_x}$. 
If $\vec a \in \sr^\vars$ is a vector of values, we moreover let $\evalof{\vec a}{f}$ denote the value of $f$ at $\vec a$.
Evaluation is monotone in both, the function and the argument, and it is associative.
Let $\vec a \leq \vec b$ be vectors of functions or values and let $f \leq g$ be functions. Let $\vec v$ be a vector of functions.
Then
\begin{align*}
    \evalof{\vec a}{f} \leq \evalof{\vec b}{f}
    \qquad
    \evalof{\vec a}{f} \leq \evalof{\vec a}{g}
    \qquad 
    \evalof{ \vec a} { \evalof{\vec v}{f}}
    = 
    \evalof{ \evalof{ \vec a} {\vec v} }{f}
    \ .
\end{align*}

Our contribution is a new method for solving systems of polynomial equations over $\sr$ in several unknowns $\vars$. 
A \bfemph{system of equations} is a vector of polynomials $\vec p$, which we denote by $\vec x = \vec p$. 
A \bfemph{solution} for it is a vector $\vec v \in \sr^\vars$ such that $\vec v = \eval{\vec v}(\vec p)$.
We always assume a vector of polynomials $\vec p$ to have components $\vecat{p}{x}$.


We will often relate languages $\lang\subseteq \Sigma^*$ over the alphabet $\Sigma=\sr\cup \vars$ of semiring elements and variables with functions.
If $\sr$ is infinite, we only need the finitely many elements that occur in the system of equations of interest.  
Given a word $w\in\Sigma^*$, we define $\srof{w}\in\polysr$ to be the monomial obtained by replacing concatenation with multiplication in the semiring.
Moreover, we define the function $\srof{\lang}=\bigsradd_{w\in\lang}\srof{w}$ summing up all monomials obtained from words in the language.
In turn, given a monomial $g\in\polysr$, we write $\wordof{g}\in \Sigma^*$ for the word obtained by understanding semiring multiplication as concatenation.



\section{Munchausen Iteration}
\label{Section:Acceleration}
We define the iteration scheme, relate it to derivation trees of context-free grammars, and with this relation derive our main theorem on convergence to the least fixed point, invoking deep results about Newton iteration from~\cite{EsparzaKieferLuttenberger2010}.
\subsection{Definition}
Consider a system of polynomial equations $\vec x = \vec p$.
Our method works over the semiring of functions $\sr^\vars \to \sr^\vars$. 
To highlight in $\vec p$ the functional aspect $\vec f$ and separate it from the constant part $\vec a$, we rewrite the system as $\vec x = \vec f + \vec a$. 
The first step is to compute a linear completion of the right-hand side polynomials $\vec f$.
The idea is borrowed from linear algebra, namely repeatedly substituting variables $y$ occurring in $\vecat{f}{x}$ by their defining functions $\vecat{f}{y}$.
To be able to represent the resulting function in a closed form, we focus on \bfemph{linear} substitutions where the next substitution is applied to a variable in $\vecat{f}{y}$. 

To render this formally, a \bfemph{substitution} is defined to be a pair consisting of a variable and a polynomial, denoted by $\set{x\mapsto g}$ from $\vars\times \polysr$. 
The \bfemph{application} of $\set{x\mapsto g}$ to a polynomial $f$ yields the set of polynomials $f\set{x\mapsto g}$ containing all variants of $f$ where one occurrence of $x$ has been replaced by $g$.
If $x$ does not occur in $f$, $f\set{x\mapsto g}$ is empty. 
We are interested in substitutions that are induced by the given system of equations and that are applied repeatedly, in the aforementioned linear fashion. 

\begin{definition}
    The set of \bfemph{linear polynomial substitutions} $\sigma_x$ for each variable $x \in \vars$ associated with $\vec f\in \polysr^{\vars}$ is defined
    mutually inductive
    by 
    \begin{align*}
        \sigma_x ::= \set{x \mapsto x} \bnf \set{x\mapsto g}\quad\text{with }g\in \vecat{f}{x}\sigma_y\ 
        .
    \end{align*}
    The set of \bfemph{linear monomial substitutions $\tau_x$} (for variable $x$) associated with $\vec f$ is defined similarly but works on monomials $m_{i_x}$ rather than the full $\vecat{f}{x}=\bigsradd_{i_x} m_{i_x}\ $.
\end{definition}
Linear polynomial substitutions yield the intuitive notion of completion we are aiming at.

\begin{definition}
    Consider $\vec f\in \polysr^{\vars}$. 
    Its \bfemph{linear completion} $\completionof{\vec f} \invfctring$ is defined component-wise by
    $\completionof{\vec f}_x = \completionof{\vecat{f}{x}} = \bigsradd_{\sigma_x} x\sigma_x$. 
    Here, $\sigma_x$ ranges over all linear polynomial substitutions $\sigma_x$ associated with $\vec f$.
\end{definition}
Linear monomial substitutions do not contain sums which makes them algorithmically easier to handle than the more general linear polynomial substitutions.
The following lemma shows that wrt. completion the two can be used interchangeably. 
The proof is by distributivity.

\begin{lemma}\label{Lemma:LinearMonomial}
    Consider $\vec f\in\polysr^{\vars}$. 
    Then 
    $\completionof{\vecat{f}{x}} = \bigsradd_{\tau_x} x \tau_x$.
\end{lemma}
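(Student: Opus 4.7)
The plan is to prove the equality by showing both inequalities $\leq$ and $\geq$ in the natural ordering, each by induction on the height of the involved substitution (mutual across variables). The hint ``by distributivity'' points to the core observation: a polynomial substitution $\sigma_x = \set{x \mapsto g}$ with $g \in \vecat{f}{x}\sigma_y$ decomposes $g$ into one affected monomial of $\vecat{f}{x}$ together with the untouched side monomials, so expanding the inner $y\sigma_y$ into a sum of monomial contributions $y\tau_y^{(k)}$ breaks $g$ into a sum of terms each of which is $x\tau_x$ for a suitable linear monomial substitution.

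For the $\leq$ direction, I would induct on the height of $\sigma_x$. The base case $\sigma_x = \set{x \mapsto x}$ is itself a valid linear monomial substitution. In the step case $\sigma_x = \set{x \mapsto g}$ with $g \in \vecat{f}{x}\sigma_y$, by construction $g$ is obtained by replacing a fixed $y$-occurrence in some monomial $m_{i_x}$, so $g = \bigsradd_{j \neq i} m_{j_x} \sradd g'$ where $g'$ is $m_{i_x}$ with that $y$-occurrence replaced by $y\sigma_y$. The induction hypothesis gives $y\sigma_y = \bigsradd_k y\tau_y^{(k)}$, and distributivity in the polynomial semiring then yields $g' = \bigsradd_k g'_k$ with each $g'_k \in m_{i_x}\tau_y^{(k)}$, so $\set{x \mapsto g'_k}$ is a valid linear monomial substitution. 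The side monomials $m_{j_x}$ are captured by $\set{x \mapsto m_{j_x}}$, obtained from $\tau_z = \set{z \mapsto z}$ on some variable $z$ occurring in $m_{j_x}$; here I invoke that $\vec f$ has been separated from the constant part $\vec a$, so every $m_{j_x}$ contains at least one variable.

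For the $\geq$ direction, I would show by induction on the height of $\tau_x$ that each $x\tau_x$ appears as a summand of some $x\sigma_x$. Given $\tau_x = \set{x \mapsto m}$ with $m \in m_{i_x}\tau_y$, the inductive hypothesis delivers a $\sigma_y$ whose value contains $y\tau_y$ as a summand; choosing $\sigma_x = \set{x \mapsto g}$ for the $g \in \vecat{f}{x}\sigma_y$ that substitutes the same $y$-occurrence in the same $m_{i_x}$ and invoking distributivity in reverse exposes $m$ as a summand of $g$, whence $x\tau_x \leq x\sigma_x$. The main obstacle I anticipate is keeping the combinatorial bookkeeping transparent — tracking the monomial index $i$ and the specific $y$-occurrence consistently across the two directions — and cleanly justifying the base-step lifting $\set{x \mapsto m_{j_x}}$ via $\tau_z = \set{z \mapsto z}$, which is exactly where the constant-freeness of $\vec f$ enters the argument.
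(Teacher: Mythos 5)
Your proof is correct and follows essentially the same route as the paper's: both directions rest on distributivity relating a nested polynomial substitution to the sum of the monomial substitutions it unfolds into, and you merely organize the unfolding as an explicit induction on substitution height where the paper distributes the fully unfolded nested expression in one step (the $\geq$ direction being the observation that each $x\tau_x$ is a summand of the $x\sigma_x$ that inserts full polynomials at the same positions). You are also right that the side monomials $m_{j_x}$ must contain a variable for $\set{x\mapsto m_{j_x}}$ to arise via $\tau_z=\set{z\mapsto z}$ --- the paper uses this silently, and it is exactly why the lemma is applied only after the constant part $\vec a$ has been split off from $\vec f$.
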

Munchausen iteration starts with the linear completion.
The iteration step then inserts the current approximant into itself. 
Note that the method is applied only to the functional part $\vec f$ of the system of equations, and the result is a sequence of  functions $\munchit{n}$.
To obtain a value in $\sr$, we have to evaluate $\munchit{n}$ at some vector $\vec b \in \sr^\vars$.
An obvious choice for $\vec b$ is the given vector of constants $\vec a$. 
We will show that we can evaluate at any vector that lies between $\vec a$ and the least fixed point, and still converge to the least fixed point (Theorem \ref{Theorem:Convergence}).

\begin{definition}
Consider $\vec f\in \polysr^{\vars}$. The Munchausen iteration is
    \begin{align*}
        \munchit{0} \ =\ \completionof{\vec f}\qquad 
        \munchit{n+1}\ =\ \evalof{\munchit{n}}{\munchit{n}}\ .
    \end{align*}
The \bfemph{Munchausen sequence wrt.} $\vec b \in \sr^\vars$ is defined by $\munchsbit{n}
        \ = \
        \evalof{\vec b} {\munchit{n}}\ $.
\end{definition}
We aim to prove that the $\nth{n}$ element of the Munchausen sequence wrt. $\vec a$ is equal to the $\nth{2^n}$ Newton approximant.
We make the link to Newton iteration using derivation trees.


\subsection{Derivation Tree Analysis}
\label{subsec:CorrespondenceDerivationTrees}
Every system of polynomial equations relates naturally to a context-free grammar. 
We show that the $\nth{n}$ Munchausen approximant coincides with the yields of the derivation trees of dimension at most $2^n$ . 
For the development, consider the functional part ${\vec f\in\polysr^{\vars}}$ of the system of equations of interest. 
We associate with it the context-free grammar
\mbox{$\grammarof{\vec f}=(\vars, S, \bigcup_{x\in\vars}P_x)$.}
The variables form the non-terminals, the semiring elements give the terminal symbols. 
There is a set of production rules $P_x$ for every non-terminal $x$.
For the definition, assume $\vecat{f}{x}=\bigsradd_{i=1}^{k_x} m_i$. 
The productions are
\begin{align*}
P_x = \setcond{x\rightarrow \wordof{m_i}}{i=1,\ldots, k_x}\ .
\end{align*}
%
Since $\grammarof{\vec f}$ is a context-free grammar, we can make use of the concept of \bfemph{derivation trees} (that may have variables at the leaves). 
Let $\treesof{x}{}$ denote the set of all derivation trees that can be generated from the non-terminal $x$.
We write $\treesparof{x}{n}$ for the set of derivation trees in $\treesof{x}{}$ of dimension at most $n\in\nat$.
The \bfemph{dimension} $\dimof{\atree}$ of a tree $\atree$ is a well-known concept~\cite{Dimension2014} and defined inductively as follows. (i) If $\atree$ has no children, then $\dimof{\atree}=0$. \mbox{(ii) If} $\atree$ has precisely one child $\atree_1$, then $\dimof{\atree}=\dimof{\atree_1}$.
(iii) If $\atree$ has at least two children, consider the children $\atree_1$ and $\atree_2$ of highest dimension.
More formally, let $\dimof{\atree_1}\geq \dimof{\atree_2}$ and $\dimof{\atree_2}\geq \dimof{\atree'}$ for all children $\atree'\neq \atree_1$.
In this case, we set
\begin{align*}
    \dimof{\atree}\ =\
    \begin{cases}
        \dimof{\atree_1}+1  &\text{if }\dimof{\atree_1}=\dimof{\atree_2}\ ,\\
        \dimof{\atree_1}    &\text{if }\dimof{\atree_1}>\dimof{\atree_2}\ .\\
    \end{cases}
\end{align*}
The correspondence with derivation trees relies on the following lemma: 
A tree of dimension $2m$ can be decomposed into trees of dimension at most $m$. 
For the formal statement, if $t'$ is a tree with leaves $l_1, \ldots , l_n$ (from left to right) and $t_1, \ldots, t_n$ are trees, we denote the tree obtained by replacing each leaf $l_i$ with $t_i$ by $t' [t_1, \ldots, t_n]$ .
\begin{lemma}
\label{Lemma:Decomposition}
    Consider a tree $\atree$ with $\dimof{\atree}=2m$. 
    Then there are trees $\atree', \atree_1,\ldots, \atree_n$ with $\dimof{\atree'},\dimof{\atree_1},\ldots, \dimof{\atree_n}\leq m$ so that
    $\atree=\atree'[\atree_1,\ldots, \atree_n]$ .
\end{lemma}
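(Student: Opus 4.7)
The plan is to cut the tree $\atree$ at the uppermost nodes whose rooted subtrees have dimension at most $m$. For a node $v$ of $\atree$, write $\atree_v$ for the subtree rooted at $v$ and call $v$ \emph{heavy} if $\dimof{\atree_v} > m$ and \emph{light} otherwise. A \emph{cut node} is a light node whose parent is heavy. Since $\dimof{\atree} = 2m > m$ (the case $m=0$ being trivial), the root is heavy. Let $\atree'$ be the tree obtained from $\atree$ by deleting, for each cut node $v$, all proper descendants of $v$, so that $v$ becomes a leaf of $\atree'$. Writing $\atree_1,\ldots,\atree_n$ for the pieces to be substituted at the leaves of $\atree'$ (the subtree $\atree_v$ for each cut node $v$, and a single-leaf tree for each original leaf of $\atree$ that survives in $\atree'$), I obtain $\atree = \atree'[\atree_1,\ldots,\atree_n]$ in the sense of the lemma, and every $\atree_i$ has dimension at most $m$ by construction.

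The real task is to show $\dimof{\atree'} \leq m$. I plan to prove the stronger claim, by bottom-up induction on $v$, that for every node $v$ which is internal in $\atree'$,
\[
    \dimof{\atree_v}\ \geq\ \dimof{\atree'_v} + m,
\]
where $\atree'_v$ denotes the subtree of $\atree'$ rooted at $v$. Specialising to the root yields $\dimof{\atree'} \leq 2m - m = m$, which is exactly what is needed. In the induction step, the children of $v$ in $\atree'$ split into two kinds: leaves of $\atree'$ (contributing dimension zero, since they are either original leaves of $\atree$ or cut nodes) and nodes internal to $\atree'$, for which the hypothesis applies and which are by definition heavy in $\atree$.

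The main obstacle, I expect, is the sub-case where the top two children of $v$ in $\atree'$ have the same dimension $d$. Here I have to control $\dimof{\atree_v}$ using that two specific heavy children $c_1, c_2$ both satisfy $\dimof{\atree_{c_i}} \geq d + m$. A short case split on whether these two dimensions coincide or not shows that the top two children of $v$ in $\atree$ both have dimension at least $d + m$, after which the dimension recurrence forces $\dimof{\atree_v} \geq d + m + 1 = \dimof{\atree'_v} + m$. The remaining configurations (distinct top two dimensions in $\atree'$, a single child, or $d = 0$) reduce either to the inductive hypothesis applied to the maximising child, or directly to the observation that an internal node of $\atree'$ is heavy in $\atree$, so $\dimof{\atree_v} \geq m + 1$.
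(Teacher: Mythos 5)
Your decomposition is the same as the paper's: your cut nodes are exactly the roots of the maximal subtrees of dimension at most $m$, so $\atree'$ and the pieces $\atree_1,\ldots,\atree_n$ coincide with those in the paper's proof. Where you genuinely differ is in how $\dimof{\atree'}\leq m$ is established. The paper argues by contradiction, assuming $\dimof{\atree'}>m$ and doing a case analysis on the siblings of the removed subtrees, resting on the asserted (but not really proved) counting step that each removed piece ``contributes a dimension of $m$'' so that $\dimof{\atree}=\dimof{\atree'}+m$. You instead prove directly, by bottom-up induction over the heavy nodes, the quantitative invariant $\dimof{\atree_v}\geq\dimof{\atree'_v}+m$, and read off the bound at the root. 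This is the honest content behind the paper's counting claim, and your handling of the delicate subcase --- where the top two children of $v$ in $\atree'$ tie at dimension $d>0$, forcing two children of $v$ in $\atree$ of dimension at least $d+m$ and hence $\dimof{\atree_v}\geq d+m+1$ --- is exactly the step the paper glosses over. In short: same decomposition, but your direct induction on a stronger invariant is more rigorous than the paper's contradiction argument, at the cost of carrying the explicit invariant through all the case distinctions of the dimension recurrence. The argument is correct; just make sure to note explicitly that every child of a heavy node survives in $\atree'$ (so $v$ has the same children in $\atree$ and $\atree'$), and that a heavy node with a single child has a heavy child, which is what makes the one-child case of the induction go through.
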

\begin{proof}
    We identify the maximal subtrees $\atree_1$ to $\atree_n$ of $\atree$ that satisfy $\dimof{\atree_i}\leq m$. 
    Note that they are unique. 
    Removing them from $\atree$ leaves us with a subtree $\atree'$.
    Tree $\atree'$ has the same root as $\atree$.
    The leaves are labeled by $\rootof{\atree_1}$ to $\rootof{\atree_n}$. 
    If we replace each leaf $\rootof{\atree_i}$ by the tree $\atree_i$, we obtain a representation of $\atree$:
    \begin{align*}
    \atree\ =\ \atree'[\atree_1,\ldots, \atree_n]\ .
    \end{align*}
    To establish $\dimof{\atree'}\leq m$, assume towards a contradiction that $\dimof{\atree'}>m$.
    Consider the children $\atree_i$ that we removed from $\atree$ to obtain $\atree'$.
    By the maximality requirement for $\atree_i$, the parent node of $\rootof{\atree_i}$ in $\atree$ has outdegree $\geq 2$.
    
    Assume for every child $\atree_i$ one of the following holds: (i)
    $\dimof{\atree_i}=m$ or (ii) $t_i$ has a sibling of dimension $>m$ or (iii) $t_i$ has two siblings of dimension $=m$.
    In each of the cases, we can assume $t_i$ to contribute a dimension of $m$ when we determine $t'[t_1,\ldots, t_n]$. 
    As a result, we obtain
    \begin{align*}
    \dimof{\atree} = \dimof{\atree'}+m > m+m\ .
    \end{align*}
    This contradicts the assumption that the dimension of $\atree$ equals $2m$.
    
    As a consequence of this contradiction, there has to be a child $\atree_i$ that does not satisfy any of (i) to (iii) above.
    This means $\dimof{\atree_i}<m$, there is no sibling of dimension $>m$, and there is at most one sibling of dimension $m$.
    Let $x$ be the parent node of $\rootof{\atree_i}$. 
    Let $\atree_{x}$ be the subtree with $x$ as its root.
    The violation of (i) to (iii) allows us to conclude $\dimof{\atree_{x}}\leq m$.
    This in turn contradicts the maximality of $\atree_i$.
    
    Since we derived a contradiction in both cases (all children satisfy (i), (ii), or (iii) and there is a child that does not satisfy (i), (ii), and (iii)), we have to conclude that the assumption $\dimof{\atree'}>m$ has to be false.
\end{proof}
The correspondence is our first main result.
Recall that the yield of a given tree, $\yieldof{\atree}$, is the word formed by the leaves when the tree is traversed in left-first manner. 

\begin{theorem}
\label{Theorem:YieldDim}
    $\munchat{n}{x}\ =\ \bigsradd_{t \in \treesparof{x}{2^n}}\srof{\yieldof{t}}$ .
\end{theorem}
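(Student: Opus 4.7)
The plan is to proceed by induction on $n$.

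For the base case $n = 0$, Lemma~\ref{Lemma:LinearMonomial} gives $\munchat{0}{x}=\completionof{\vecat{f}{x}}=\bigsradd_{\tau_x} x\tau_x$, so it suffices to exhibit a correspondence between linear monomial substitutions $\tau_x$ and derivation trees in $\treesparof{x}{1}$ that matches $x\tau_x$ with the yield. Each unfolding step $\tau_x=\set{x\mapsto g}$, where $g$ arises from a monomial $m\in \vecat{f}{x}$ via some $\tau_y$, corresponds to a production $x\rightarrow \wordof{m}$ followed by the further expansion of exactly the variable $y$; all other children of the $x$-node remain as leaves. By the recursive definition of $\dimof{\cdot}$ this forces at every branching node at most one child of positive dimension, so the resulting tree has dimension $\leq 1$. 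Conversely, any tree in $\treesparof{x}{1}$ singles out at each internal node its at most one child of positive dimension, from which the corresponding linear monomial substitution can be read back.

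For the induction step, unfolding the Munchausen recursion and applying the induction hypothesis together with linearity of evaluation gives
\begin{align*}
\munchat{n+1}{x}\ =\ \evalof{\munchit{n}}{\munchat{n}{x}}\ =\ \bigsradd_{t'\in\treesparof{x}{2^n}} \evalof{\munchit{n}}{\srof{\yieldof{t'}}}\ .
\end{align*}
Writing $\yieldof{t'}=w_0 y_1 w_1\cdots y_k w_k$ with $w_i\in \sr^*$ and $y_i\in\vars$, evaluation substitutes each $y_i$ by $\munchat{n}{y_i}=\bigsradd_{t_i\in\treesparof{y_i}{2^n}} \srof{\yieldof{t_i}}$. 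Distributivity then yields
\begin{align*}
\munchat{n+1}{x}\ =\ \bigsradd_{t'\in\treesparof{x}{2^n}}\ \bigsradd_{t_1,\ldots,t_k} \srof{\yieldof{t'[t_1,\ldots, t_k]}}\ ,
\end{align*}
where the inner sum ranges over tuples with $t_i\in\treesparof{y_i}{2^n}$ for each variable leaf $y_i$ of $t'$.

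It remains to identify this double sum with $\bigsradd_{t\in\treesparof{x}{2^{n+1}}}\srof{\yieldof{t}}$. For the inclusion $\subseteq$, a simple induction on $t'$ establishes the bound $\dimof{t'[t_1,\ldots,t_k]}\leq \dimof{t'}+\max_i\dimof{t_i}\leq 2^n+2^n=2^{n+1}$. For the inclusion $\supseteq$, every $t\in\treesparof{x}{2^{n+1}}$ admits such a decomposition: if $\dimof{t}\leq 2^n$, take $t' = t$ and use the trivial single-node trees at each variable leaf; otherwise apply Lemma~\ref{Lemma:Decomposition}, whose proof extends verbatim from $\dimof{t}=2^{n+1}$ to $\dimof{t}\leq 2^{n+1}$ because the derived inequality $\dimof{t}>2m$ already yields the required contradiction. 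Idempotence absorbs any overcounting caused by distinct decompositions producing the same tree.

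The main obstacle is the base case: cleanly matching the inductive definition of linear monomial substitutions to the dimension-at-most-one condition on derivation trees, and verifying that yields coincide with $x\tau_x$. Once this correspondence is in place, the inductive step reduces to algebraic manipulation together with Lemma~\ref{Lemma:Decomposition} and the standard bound on the dimension of composed trees.
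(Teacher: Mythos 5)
Your proof is correct and follows essentially the same route as the paper's: induction on $n$, with the base case reducing via Lemma~\ref{Lemma:LinearMonomial} to the correspondence between linear monomial substitutions and derivation trees of dimension at most one, and the induction step combining distributivity of evaluation with the decomposition of Lemma~\ref{Lemma:Decomposition} (your explicit remarks that the bound $\dimof{t'[t_1,\ldots,t_k]}\leq\dimof{t'}+\max_i\dimof{t_i}$ covers the upper direction and that the lemma's proof extends to dimension $\leq 2m$ match what the paper asserts implicitly). The only detail worth adding in the base-case converse is that dimension-$0$ subtrees are necessarily single leaves --- since the constants were moved into $\vec a$, the grammar has no unit productions --- which is what lets you read a dimension-$\leq 1$ tree back as a \emph{linear} substitution.
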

\begin{proof}
    We proceed by induction on $n$.\\[0.2cm]
    {\bf \sffamily Base case $n=0$}\quad  
    In the base case, we have 
    \begin{align*}
    \munchat{0}{x} =\completionof{\vecat{f}{x}}
    =\bigsradd_{\sigma_x} x \sigma_x
    =\bigsradd_{\tau_x} x \tau_x \ .
    \end{align*}
    The first two equalities are by definition, the last is Lemma~\ref{Lemma:LinearMonomial}. We have to show that 
    \begin{align*}
    \bigsradd_{\tau_x} x \tau_x
    = \bigsradd_{\atree \in \treesparof{x}{1}}\srof{\yieldof{\atree}}\ .
    \end{align*}
    To see that each $x \tau_x$ can be obtained as the yield of a derivation tree rooted in $x$, note that the substitution $\tau_x$ takes the form
    \begin{align*}
    \set{x\mapsto m_x\set{\ldots\set{z\mapsto m_z}}\ldots}\ .
    \end{align*}
    Here, $m_x$ to $m_z$ are monomials and so $x\rightarrow m_x$ up to $z \rightarrow m_z$ are production rules in the grammar $\grammarof{\vec f}$.
    Hence, the application of $\tau_x$ corresponds to a derivation sequence from~$x$.
    
    We show that the derivation tree has dimension at most one. 
    Since the substitution is linear, we can highlight in every rule the variable that will be replaced next.
    So in the initial step, we have $x\rightarrow \wordof{m_i}$ with $m_i = g_{1}\srmult y\srmult g_2$.
    In the derivation tree, the rule yields several subtrees for root $x$.
    Since $g_{1}$ and $g_2$ are monomials, their subtrees consist of single nodes labeled by an element from the semiring or a variable.
    The remaining subtree is for $y$ and the same reasoning applies.
    Since the subtrees labeled by a semiring element and the subtrees of a variable have dimension zero,
    the overall derivation tree has dimension at most one (zero, if no rule is applied).
    
    For the reverse direction, we have to show that for every subtree $\atree\in\treesof{x}$ of dimension at most one, we have a linear monomial substitution $\tau_x$ that we can apply to $x$ to obtain $\srof{\yieldof{\atree}}$. 
    The first observation is that in $\atree$, for every pair of siblings at least one has to have dimension zero. 
    Assume this was not the case and both siblings have dimension at least one. 
    In this case, their parent node has dimension at least two which contradicts the assumption on the dimension of $\atree$. 
    The only trees with dimension zero are linear paths. 
    Since there are no productions of the shape $x\rightarrow y$ in $\grammarof{\vec f}$
    (since we removed the constants from $\vec f$ and all other monomials are at least of the shape $a_1\srmult y\srmult a_2$) the trees of dimension zero have to be leaves.
    Combined with the fact that one sibling has to have dimension zero, we obtain that $\atree$ is a path with semiring elements and variables to the sides. Hence, it forms a linear monomial substitution. 
    Note that this covers the case where the path has dimension zero.
    Then the tree is $x$ itself, to which we apply $\set{x \mapsto x}$.\\[0.2cm]
    {\bf \sffamily Induction step}\quad
    Assume $\munchat{n}{x} =\bigsradd_{\atree \in \treesparof{x}{2^n}}\srof{\yieldof{\atree}}$ holds and consider $n+1$. \\[0.2cm]
    The following equations make use of the definition of $\munchit{n+1}$, the induction hypothesis, and the definition of evaluation:
    \begin{align*}
    \munchat{n+1}{x}
    &=\evalof{\munchit{n}}{\munchat{n}{x} }\\
    &=\evalof{\munchit{n}}{\bigsradd_{\atree\in\treesparof{x}{2^n}}  \srof{\yieldof{\atree}}\ }\\
    &= 
    \bigsradd_{\atree\in\treesparof{x}{2^n}}
    \evalof{\munchit{n}}{\ \srof{\yieldof{\atree}}\ }\ .
    \end{align*}
    The evaluation 
    $\evalof{\munchit{n}}{\ \srof{\yieldof{\atree}}\ }$
    replaces every variable $y$ in $\srof{\yieldof{\atree}}$ by $\munchat{n}{y}$.
    By the induction hypothesis, $\munchat{n}{y}=\bigsradd_{\atree\in\treesparof{y}{2^n}}\srof{\yieldof{\atree}}$.
    This means every variable $y$ is replaced by the sum of the yields of all derivation trees $\atree'$ with $\dimof{\atree'}\leq 2^n$. 
    By $\omega$-continuity, we can equivalently sum up all monomials that result from $\srof{\yieldof{\atree}}$ by replacing $y$ by the yield  of a single derivation tree $\atree'$.
    
    To establish the inequality $\munchat{n+1}{x} \leq \bigsradd_{\atree\in\treesparof{x}{2^{n+1}}} \srof{\yieldof{\atree}}$, note that 
    every monomial of $\evalof{\munchit{n}}{\ \srof{\yieldof{\atree}} \ }$ is obtained from a derivation tree $\atree''$ which equals $\atree$ but appends the trees $\atree'$ to the leaves.
    Since $\atree$ as well as the $\atree'$ have dimension at most $2^n$, the resulting tree $\atree''$ has dimension at most $2^n+2^n = 2^{n+1}$.
    
    For the reverse direction, we show $\srof{\yieldof{\atree}}\leq \munchat{n+1}{x}$.
    Consider a derivation tree $\atree\in{\treesof{x}}$ of dimension $2^{n+1}$.
    The same argumentation holds for trees of smaller dimension.
    By Lemma~\ref{Lemma:Decomposition}, the tree can be decomposed into $\atree'$ and $\atree_1,\ldots \atree_n$, all of dimension at most $2^n$:
    \begin{align*}
    \atree\ =\ \atree'[x_1\mapsto\atree_1, \ldots, x_n\mapsto\atree_n]\ .
    \end{align*}
    By definition of the yield, we get that $\yieldof{\atree}$ results from $\yieldof{\atree'}$ by replacing $x_1$ to $x_n$ with $\yieldof{\atree_1}$ to $\yieldof{\atree_n}$, respectively. 
    The above discussion concludes the case.
\end{proof}


\subsection{Results}
We prove that the Munchausen sequence converges to the least fixed point. 
Moreover, in the commutative case it is guaranteed to reach the least fixed point in a number of steps that is logarithmic in the number of variables.
Both results rely on a precise correspondence between Munchausen iteration and Newton iteration. 

To define the Newton iteration, we recall the concept of \bfemph{differentials}.  
The differential of a polynomial $p$ wrt. a variable $x \in \vars$ at point $\vec v$ is the polynomial defined inductively by
\begin{align*}
    \diffwrt{p}{\vec v}{x} =
    \begin{cases}
        \bigsradd_{i \in I} \diffwrt{m_i}{\vec v}{x}
        & \text{ if } p = \bigsradd_{i \in I} m_i\ , \\
        ( \diffwrt{g}{\vec v}{x} \srmult \evalof{\vec v}{h} )
        \sradd
        ( \evalof{\vec v}{g} \srmult \diffwrt{h}{\vec v}{x} )
        & \text{ if } p = g \srmult h\ , \\
        0 & \text{ if } p \in S \text{ or } p \in \vars \setminus \set{x}\ ,\\
        x & \text{ if } p = x\ .\\
    \end{cases}
\end{align*}
The differential of $p$ at point $\vec v$ is the sum   
$\diff{p}{\vec v} = \bigsradd_{x \in \vars} \diffwrt{p}{\vec v}{x}$ .
The differential of a vector of polynomials is defined component-wise, $(\diff{\vec p}{\vec v})_x = \diff{\vecat{p}{x}}{\vec v}$.
The function ${\diff{\vec f}{\vec v}}^*$ is defined by summing up all $i$-fold applications of the differential, i.e. 
${\diff{\vec f}{\vec v}}^*  = \bigsradd_{i \in \nat} {\diff{\vec f}{\vec v}}^i$ with
$
{\diff{\vec f}{\vec v}}^0 = \mathit{id}$ and 
${\diff{\vec f}{\vec v}}^{i+1} = \evalof{ {\diff{\vec f}{\vec v}}^{i}  }{ \diff{\vec{f}}{\vec v}}$\ .
 
With differentials at hand, the \bfemph{Newton iteration} is
\begin{align*}
    \newtonit{0}
    \ =\ \evalof{\vec 0}{\vec p}
    \qquad
    \newtonit{n+1}
    \ =\ \evalof
    {\newtonit{n}}
    { {\diff{\vec p}{\newtonit{n}}}^*}\ .
\end{align*}
Actually, this is not the most general definition of Newton iteration but coincides with it in the idempotent case that we consider.
An explanation of why the sequence mimics the classical method from numerics is beyond the scope of this paper. It can be found in~\cite{EsparzaKieferLuttenberger2010}.

The $\nth{n}$-Newton approximant is known to correspond to the derivation trees of dimension at most $n$.
To be precise, Esparza et al. consider \bfemph{complete} derivation trees where the yields do not contain variables.\footnote{To handle the non-idempotent case, the trees are also decorated. We elaborate on this in Section~\ref{Appendix:Decorated}.}
Let $G_{\vec f}(\vec a)$ be the grammar that adds to $G_{\vec f}$ the rules $x \to \vecat{a}{x}$ for each variable. 
Let $\ctreesparof{x}{n}$ denote the set of complete derivation trees of dimension at most $n$ from non-terminal $x$ in $G_{\vec f}(\vec a)$.

\begin{theorem}[Esparza et al. \cite{EsparzaKieferLuttenberger2010}]
\label{Theorem:Esparza}
    $
        {\newtonit{n}}_x\ =\ \bigsradd_{t \in \ctreesparof{x}{n} } \srof{\yieldof{t}}
        \ .
    $
\end{theorem}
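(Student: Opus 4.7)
The plan is to prove the statement by induction on $n$, developing a decomposition of complete derivation trees that mirrors the way Newton iteration unfolds. The core idea is that one Newton step extends the dimension by $1$ (in contrast to the Munchausen doubling established in Theorem~\ref{Theorem:YieldDim}), and that the differential~${\diff{\vec p}{\newtonit{n}}}$ precisely captures one layer of ``linear context'' wrapped around subtrees of dimension at most~$n$.

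For the base case $n=0$, I would compute $\newtonit{0}_x = \evalof{\vec 0}{\vecat{p}{x}} = \evalof{\vec 0}{\vecat{f}{x}} \sradd \vecat{a}{x} = \vecat{a}{x}$, using that $\vec f$ has no constant monomials and $\evalof{\vec 0}{\cdot}$ annihilates everything containing a variable. On the tree side, as already observed in the proof of Theorem~\ref{Theorem:YieldDim}, a dimension-$0$ tree is a linear path; since the only unary productions in $\grammarof{\vec f}(\vec a)$ are the rules $x \to \vecat{a}{x}$ added for the constants, the only complete derivation tree of dimension~$0$ rooted at $x$ is this single rule, whose yield is $\vecat{a}{x}$.

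For the inductive step, the key is a decomposition lemma analogous to Lemma~\ref{Lemma:Decomposition} but tailored to increments of $1$: any complete derivation tree $\atree$ of dimension $\leq n+1$ rooted at $x$ can be written as a linear ``spine'' running from the root down to a distinguished leaf (or ending at a terminal), with every off-spine subtree being a maximal subtree of dimension $\leq n$. The proof of this is by an easy analysis using the three clauses of the dimension definition: following the unique child of maximum dimension yields a path whose off-spine siblings all have strictly smaller dimension, hence dimension $\leq n$. I would then recognise that the differential $\diff{\vecat{p}{x}}{\vec v}$ enumerates exactly the ``one-step extensions'' of the spine (pick a production, then descend into one variable occurrence while substituting $\vec v$ for every other variable symbol), so ${\diff{\vec p}{\vec v}}^i$ enumerates spines of length $i$ with $\vec v$ plugged in at every side branch. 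Summing over $i$ via the Kleene star gives all finite spines, and the outer $\evalof{\newtonit{n}}{\cdot}$ fills the terminating variable at the end of the spine with $\newtonit{n}$. Setting $\vec v = \newtonit{n}$ and using the induction hypothesis to interpret every plugged-in $\newtonit{n}_y$ as $\bigsradd_{s \in \ctreesparof{y}{n}} \srof{\yieldof{s}}$, distributivity (via $\omega$-continuity) then expresses $\newtonit{n+1}_x$ as $\bigsradd_{\atree \in \ctreesparof{x}{n+1}} \srof{\yieldof{\atree}}$.

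The main obstacle is the bookkeeping in identifying $\diff{\vec p}{\vec v}^*$ with linear spines: one must verify that the inductive definition of $\diff{p}{\vec v}$ for products $p = g \srmult h$ produces exactly one spine choice (descend into either $g$ or $h$ while evaluating the other at $\vec v$), that iteration of the differential composes these choices correctly, and that the idempotency assumption makes the sum over all possible spine-plus-subtree decompositions collapse to a sum over distinct trees rather than over decompositions (so each complete tree is counted, not over-counted in a way that matters). Once this correspondence between $\diff{\vec p}{\vec v}^*$ and spines is made precise, the induction closes by the same style of yield-decomposition argument used at the end of the proof of Theorem~\ref{Theorem:YieldDim}.
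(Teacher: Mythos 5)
The paper does not actually prove this statement: it is imported verbatim (modulo the decorated/undecorated reformulation handled in Section~\ref{Appendix:Decorated}) from Esparza et al.\ and used as a black box, so there is no in-paper proof to compare yours against. Judged on its own, your reconstruction follows the standard derivation-tree argument and is structurally the ``increment-by-one'' analogue of the paper's proof of Theorem~\ref{Theorem:YieldDim}; the identification of ${\diff{\vec p}{\vec v}}^*$ with linear spines is exactly the content of the paper's Theorem~\ref{Theorem:Differential} combined with the base case of Theorem~\ref{Theorem:YieldDim} (dimension-$\le 1$ trees are paths with dangling leaves). Your base case and your appeal to idempotence to absorb over-counting are correct, and the latter is precisely why the paper's footnote restricts the undecorated formulation to io-semirings.

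The one place that needs tightening is the decomposition lemma. ``Following the unique child of maximum dimension'' is not well defined: the maximum-dimension child need not be unique (two children of equal maximal dimension is exactly the case that increments the dimension). The correct invariant is that in a tree of dimension $\le n+1$ at most one child of any node can have dimension $n+1$ (otherwise the parent would exceed $n+1$), so the spine descends through that child while all siblings have dimension $\le n$; and when you reach a node all of whose children have dimension $\le n$, the spine must take \emph{one more} step into some variable child, because in the differential picture the spine terminates at a variable occurrence into which $\newtonit{n}$ is substituted, i.e.\ the subtree hanging at the spine's end must itself have dimension $\le n$ (trees of dimension $\le n$ are covered by the length-zero spine, the $\mathit{id}$ summand of the star). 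Equivalently, you can reuse the maximal-subtree argument of Lemma~\ref{Lemma:Decomposition} with the split $(1,n)$ in place of $(m,m)$: remove the maximal subtrees of dimension $\le n$ and check the remainder has dimension $\le 1$. With that repair the induction closes as you describe.
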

We argue that the complete trees of dimension $n$ of $G_{\vec f}(\vec a)$ are precisely the (incomplete) trees of dimension $n$ of $G_{\vec f}$, extended by appending the constants.
Appending the constants means to every leaf labeled by $x$ we append a child node $\vecat{a}{x}$. 
To see the correspondence, note that removing or adding those appendices does not change the dimension. 
The semiring element corresponding to the yield of the extended tree is precisely the semiring element for the yield of the original tree evaluated at the vector $\vec a$.

\begin{lemma}
\label{Lemma:EvalVsComplete}
    $
        \evalof {\vec a} { \bigsradd_{t \in \treesparof{x}{2^n}}  \srof{\yieldof{t}} }
        \ =\ 
        \bigsradd_{t \in \ctreesparof{x}{2^n}}  \srof{ \yieldof{t}}\ .
    $
\end{lemma}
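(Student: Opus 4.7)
The plan is to exhibit an explicit bijection between $\treesparof{x}{2^n}$ and $\ctreesparof{x}{2^n}$ that matches the yields on the two sides of the equation, and then push the evaluation inside the sum using $\omega$-continuity.

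First, I would define the \emph{completion map} $t \mapsto \hat{t}$ that sends a derivation tree $t$ of $G_{\vec f}$ to a complete derivation tree $\hat{t}$ of $G_{\vec f}(\vec a)$ by appending, to every leaf labeled by a variable $y \in \vars$, a single child labeled by the constant $\vecat{a}{y}$ (using the new production $y \to \vecat{a}{y}$). Conversely, from a complete tree of $G_{\vec f}(\vec a)$ one recovers a tree of $G_{\vec f}$ by removing every such leaf-child produced by the new rules and turning its parent back into a variable leaf. These operations are inverse to each other, so the map is a bijection on the set of all derivation trees.

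Next, I would check that the completion map preserves dimension. Appending one child to a leaf replaces a node of dimension $0$ by an internal node with a single child of dimension $0$; by clause (ii) of the definition of $\dimof{\cdot}$ this node still has dimension $0$. Propagating upward, every ancestor retains its original dimension. Hence $\dimof{\hat{t}} = \dimof{t}$, and the bijection restricts to a bijection $\treesparof{x}{2^n} \to \ctreesparof{x}{2^n}$. I would also verify the key yield identity $\srof{\yieldof{\hat{t}}} = \evalof{\vec a}{\srof{\yieldof{t}}}$: traversing $t$ left-to-right produces a word of semiring elements interleaved with variable symbols, and $\hat{t}$ has the same traversal except that each variable $y$ is immediately followed in the yield by $\vecat{a}{y}$; evaluating $\srof{\yieldof{t}}$ at $\vec a$ replaces each $y$ by $\vecat{a}{y}$, producing the same monomial.

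Finally, I would combine these two facts with the compatibility of evaluation with infinite sums (an instance of the $\omega$-continuity axioms applied to the polynomial function $\srof{\yieldof{t}}$ evaluated at $\vec a$, together with the linearity of evaluation in its second argument recorded in Section~\ref{Section:Preliminaries}):
\begin{align*}
\evalof{\vec a}{\bigsradd_{t \in \treesparof{x}{2^n}} \srof{\yieldof{t}}}
&= \bigsradd_{t \in \treesparof{x}{2^n}} \evalof{\vec a}{\srof{\yieldof{t}}}
= \bigsradd_{t \in \treesparof{x}{2^n}} \srof{\yieldof{\hat{t}}}
= \bigsradd_{t' \in \ctreesparof{x}{2^n}} \srof{\yieldof{t'}}\ ,
\end{align*}
where the last equality uses the dimension-preserving bijection $t \mapsto \hat{t}$.

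There is no deep obstacle here; the only point requiring genuine care is the dimension-preservation step. One must be explicit that the appended constant leaf has dimension $0$ and is attached as a \emph{single} child, so that the inductive clause for outdegree one applies and the parent's dimension is unchanged. Everything else is a routine unfolding of definitions together with the continuity axioms available for io-semirings.
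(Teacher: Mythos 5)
Your proof is correct and follows essentially the same route as the paper, which justifies the lemma by the informal argument preceding its statement: appending/removing the constant children $\vecat{a}{y}$ gives a dimension-preserving bijection between the incomplete trees of $G_{\vec f}$ and the complete trees of $G_{\vec f}(\vec a)$, with the yield of the extended tree equal to the evaluated yield of the original. Your write-up merely makes explicit the details the paper leaves implicit (the inverse map, the outdegree-one clause for dimension preservation, and distributing $\mathit{eval}_{\vec a}$ over the countable sum via $\omega$-continuity).
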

%
We can now show that the $\nth{n}$ element of the Munchausen sequence wrt. $\vec a$ equals the \mbox{$\nth{2^n}$ Newton} approximant. 
Since the Newton sequence converges to the least fixed point~$\mu\vec p$, so does the Munchausen sequence.
Evaluating at larger vectors $\vec b$ requires further arguments.

\begin{theorem}
\label{Theorem:Convergence}
    Let $\vec x = \vec p  = \vec f + \vec a$ be a system of polynomial equations.
    \begin{enumerate}[(1)]
        \item
            $\munchsit{n}= \newtonit{2^n}$ .
        \item
            Let $\vec a \leq \vec b \leq \mu \vec p$.
            Then
            $\sup_{n \in \nat} \munchsbit{n} = \mu \vec p$ .
    \end{enumerate}
\end{theorem}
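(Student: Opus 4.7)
For part (1), the statement is essentially a chaining of results already assembled in the paper. By definition $\munchsit{n} = \evalof{\vec a}{\munchit{n}}$. Theorem~\ref{Theorem:YieldDim} rewrites $\munchit{n}$ componentwise as the sum $\bigsradd_{t \in \treesparof{x}{2^n}} \srof{\yieldof{t}}$. Lemma~\ref{Lemma:EvalVsComplete} then turns this sum over incomplete trees, evaluated at $\vec a$, into the sum over the complete derivation trees $\ctreesparof{x}{2^n}$ of $G_{\vec f}(\vec a)$. Finally, Theorem~\ref{Theorem:Esparza} of Esparza et al.\ identifies that sum with ${\newtonit{2^n}}_x$. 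So part~(1) is a short three-line calculation.

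For part (2), I would prove $\mu \vec p$ is both a lower and an upper bound for the supremum. The lower bound is the easier direction and uses part~(1): monotonicity of evaluation in its argument gives $\munchsbit{n} = \evalof{\vec b}{\munchit{n}} \geq \evalof{\vec a}{\munchit{n}} = \munchsit{n} = \newtonit{2^n}$, and since Newton iteration converges to $\mu \vec p$ (as stated in \cite{EsparzaKieferLuttenberger2010}), the supremum of the Munchausen-$\vec b$ sequence dominates $\mu \vec p$.

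The harder direction, and the main obstacle, is to show $\munchsbit{n} \leq \mu \vec p$ for every~$n$. By monotonicity in the argument again, and since $\vec b \leq \mu \vec p$, it suffices to prove the stronger pointwise bound $\evalof{\mu \vec p}{\munchit{n}} \leq \mu \vec p$. Using Theorem~\ref{Theorem:YieldDim}, this reduces to showing that for every derivation tree $t \in \treesof{x}$ in $\grammarof{\vec f}$, the semiring element obtained from $\srof{\yieldof{t}}$ by substituting each variable leaf $y$ with $(\mu \vec p)_y$ is bounded above by $(\mu \vec p)_x$. I would prove this claim by induction on the structure of $t$: the base case where $t$ is a single leaf labelled $x$ is immediate since $\srof{x}[\vec y := \mu \vec p] = (\mu \vec p)_x$; for the inductive step, if the root applies the production $x \to \wordof{m_i}$ corresponding to a monomial $m_i$ of $\vecat{f}{x}$, apply the induction hypothesis to each child subtree and use multiplicative monotonicity to see that the evaluated yield is at most $\evalof{\mu \vec p}{m_i}$, which in turn is bounded by $\evalof{\mu \vec p}{\vecat{f}{x}} \leq \evalof{\mu \vec p}{\vec p_x} = (\mu \vec p)_x$ because $\mu \vec p$ is a fixed point of $\vec p = \vec f + \vec a$.

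Putting the two bounds together yields $\sup_{n \in \nat} \munchsbit{n} = \mu \vec p$. The only subtlety worth double-checking is that the sup-chain argument is legitimate, i.e.\ that $\munchsbit{n}$ is actually monotone in $n$; this follows because $\munchit{n}$ is itself monotone in $n$ (a consequence of Theorem~\ref{Theorem:YieldDim}, since $\treesparof{x}{2^n} \subseteq \treesparof{x}{2^{n+1}}$) and evaluation is monotone.
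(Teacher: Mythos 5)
Your proof is correct. Part (1) and the lower bound of part (2) follow the paper verbatim: chain Theorem~\ref{Theorem:YieldDim}, Lemma~\ref{Lemma:EvalVsComplete} and Theorem~\ref{Theorem:Esparza} for (1), then use monotonicity of evaluation in the argument plus Newton convergence for $\sup_n \munchsbit{n} \geq \mu\vec p$. For the upper bound $\evalof{\mu\vec p}{\munchit{n}} \leq \mu\vec p$, however, you take a genuinely different route. The paper does \emph{not} go back through the tree characterization: it argues by induction on the iteration index $n$, where the base case $\evalof{\mu\vec p}{\completionof{\vecat{f}{x}}} \leq (\mu\vec p)_x$ is handled by an inner induction on the structure of linear monomial substitutions $\tau_y$ (via Lemma~\ref{Lemma:LinearMonomial}), and the step case uses associativity of evaluation, $\evalof{\mu\vec p}{\munchit{n+1}} = \evalof{\evalof{\mu\vec p}{\munchit{n}}}{\munchit{n}} \leq \evalof{\mu\vec p}{\munchit{n}}$, applying the induction hypothesis twice. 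You instead invoke Theorem~\ref{Theorem:YieldDim} once more and do a single structural induction on derivation trees, showing that the evaluated yield of any $t \in \treesof{x}$ is at most $(\mu\vec p)_x$. Your argument is arguably more uniform, since it reuses machinery already established and avoids the two nested inductions; the paper's version is self-contained at the level of the iteration and of substitutions. One step you leave implicit, which the paper states explicitly (``by using idempotence''): passing from ``each summand $\srof{\yieldof{t}}$ evaluates to at most $(\mu\vec p)_x$'' to ``the whole sum is at most $(\mu\vec p)_x$'' needs idempotence for finite partial sums and $\omega$-continuity to pass to the supremum, since $\treesparof{x}{2^n}$ may be infinite. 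Your closing remark that $\munchsbit{n}$ is a chain (so the supremum exists) is a legitimate point that the paper glosses over, and your justification via the inclusion $\treesparof{x}{2^n} \subseteq \treesparof{x}{2^{n+1}}$ is sound.
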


\begin{proof}
    We show (1). 
    Using Theorem~\ref{Theorem:YieldDim}, Lemma~\ref{Lemma:EvalVsComplete}, and Theorem~\ref{Theorem:Esparza} yields
    \begin{align*}
        \munchsat{n}{x}
        = \evalof {\vec a} {\munchat{n}{x}}
        = \mathit{eval}_{\vec a} \big(\bigsradd_{t \in \treesparof{x}{2^n}}\srof{\yieldof{t}} \big )
        = \bigsradd_{t \in \ctreesparof{x}{2^n}}  \srof{ \yieldof{t}}
        = \newtonat{2^n}{x}
        \ .
    \end{align*}
\end{proof}
In the commutative case, we can apply another deep result from \cite{EsparzaKieferLuttenberger2010}: The number of iterations needed to reach the least fixed is at most the number of variables in $\vars$.

\begin{corollary}
    If $\sr$ is commutative, we have
    $\mu \vec p = \munchsit{ \lceil \log | \vars | \rceil } \ .$
\end{corollary}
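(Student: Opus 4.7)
The plan is to combine the correspondence established in Theorem~\ref{Theorem:Convergence}(1) with the cited commutative-case result from Esparza et al.~\cite{EsparzaKieferLuttenberger2010}. The deep result referenced just before the corollary asserts that, over a commutative io-semiring, the Newton sequence reaches the least fixed point after at most $|\vars|$ iterations, i.e.\ $\newtonit{|\vars|} = \mu \vec p$. Once the Newton sequence has stabilised, all later approximants equal $\mu \vec p$ as well, because each Newton approximant is bounded above by $\mu \vec p$ and the sequence is monotone increasing (both of which are standard facts about Newton iteration established in the same reference).

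With this in mind, I would proceed in two short steps. First, apply Theorem~\ref{Theorem:Convergence}(1) at $n = \lceil \log |\vars| \rceil$ to obtain
\begin{align*}
    \munchsit{\lceil \log |\vars| \rceil} \ =\ \newtonit{2^{\lceil \log |\vars| \rceil}}\ .
\end{align*}
Second, observe that $2^{\lceil \log |\vars| \rceil} \geq |\vars|$, so by the stabilisation argument above the right-hand side equals $\newtonit{|\vars|} = \mu \vec p$.

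There is really no obstacle here beyond bookkeeping: the rounding $\lceil \log |\vars| \rceil$ is precisely what is needed so that one Munchausen step, which doubles the effective Newton index, overshoots $|\vars|$; and the commutativity assumption is used solely to invoke the $|\vars|$-step bound from~\cite{EsparzaKieferLuttenberger2010}. The only point worth double-checking is that $\vec a \leq \mu \vec p$ (so that clause~(2) of Theorem~\ref{Theorem:Convergence} applies with $\vec b = \vec a$ and our Munchausen sequence is indeed the one whose $n$th term is $\newtonit{2^n}$), which is immediate since $\mu \vec p = \evalof{\mu \vec p}{\vec p} = \evalof{\mu \vec p}{\vec f} \sradd \vec a \geq \vec a$.
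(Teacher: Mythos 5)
Your proof is correct and follows exactly the argument the paper intends: combine the $|\vars|$-step bound for commutative Newton iteration from~\cite{EsparzaKieferLuttenberger2010} with Theorem~\ref{Theorem:Convergence}(1) and the fact that the Newton sequence, being monotone and bounded by $\mu\vec p$, stabilises once it reaches the least fixed point. The additional checks ($2^{\lceil \log |\vars| \rceil} \geq |\vars|$ and $\vec a \leq \mu\vec p$) are the right bookkeeping and nothing is missing.
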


\subsection{Related Methods}
We already elaborated on the relationship with Newton iteration and with Kleene iteration. 
An improvement of Newton iteration to a hierarchy (in terms of convergence speed) of iteration schemes appeared in~\cite{EKLSTACS2007}. 
The idea is to repeatedly apply the Newton operator to itself.
The main result shows that one application of the $n$-fold Newton operator and $n$ steps of Newton iteration coincide. 

The hierarchy of Newton iterations is substantially different from the Munchausen iteration we present here.
It relies on a linear derivation process that adds one dimension with each self application. 
In this (outer) derivation a result of dimension $n$ is inserted, leading to a result of dimension $n+1$.
Munchausen iteration inserts a derivation result of dimension $n$ into a derivation of dimension $n$, thus doubling the analysis information in every step.


\section{Algorithmic Considerations}
\label{Section:Algorithmics}
We study the operations of linear completion and evaluation as they are needed for the initial and for the iteration step of the Munchausen scheme.


\subsection{Linear Completion}
As indicated by the correspondence between the $\nth{0}$ Munchausen approximant and the $\nth{1}$ Newton approximant, the differential $D \vec f$ should be a possibility to represent the linear completion of $\vec f$.
To be precise, we need to sum up all $i$-fold applications of the differential to obtain the linear completion.
The proof shows that the $i$-fold application corresponds to all monomial substitutions of length $i+1$.
\begin{theorem}
    \label{Theorem:Differential}
    For every vector $\vec v \in \sr^\vars$, we have
    $\evalof{\vec v}{\completionof{\vec f}}  = \evalof{\vec v}{ {\diff{\vec f}{\vec v}}^*}$ .
\end{theorem}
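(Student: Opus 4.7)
The plan is to partition linear monomial substitutions $\tau_x$ by recursion depth --- $\set{z\mapsto z}$ has depth $0$, and $\set{x\mapsto m'}$ with $m' \in m\tau_y$ has depth $1 + \mathrm{depth}(\tau_y)$ --- and to prove by induction on $i$ the key identity
\begin{align*}
    \evalof{\vec v}{({\diff{\vec f}{\vec v}}^i)_x}
    \;=\;
    \bigsradd_{\tau_x \text{ of depth } i} \evalof{\vec v}{x \tau_x}\ .
\end{align*}
Summing over $i$ and pushing the sum through evaluation by $\omega$-continuity then gives
\begin{align*}
    \evalof{\vec v}{({\diff{\vec f}{\vec v}}^*)_x}
    \;=\;
    \bigsradd_{\tau_x} \evalof{\vec v}{x\tau_x}
    \;=\;
    \evalof{\vec v}{\completionof{\vec f}_x}\ ,
\end{align*}
where the last equality is Lemma~\ref{Lemma:LinearMonomial}.

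The base case $i=0$ is immediate: $({\diff{\vec f}{\vec v}}^0)_x = x$ and the unique depth-$0$ substitution $\set{x \mapsto x}$ gives $x\tau_x = x$. The backbone of the induction step is first a direct calculation of $\diff{m}{\vec v}$ for a generic monomial $m = a_0 \srmult x_{i_1} \srmult a_1 \srmult \ldots \srmult x_{i_l} \srmult a_l$ of $\vecat{f}{x}$: unfolding the product rule and discarding the zero contributions from constants and non-matching variables shows that $\diff{m}{\vec v}$ equals the sum, ranging over variable occurrences in $m$, of the monomial with that one occurrence left as a variable and every other occurrence evaluated at $\vec v$. For the step from $i$ to $i+1$, I use $({\diff{\vec f}{\vec v}}^{i+1})_x = \evalof{({\diff{\vec f}{\vec v}}^i)}{(\diff{\vec f}{\vec v})_x}$; associativity of evaluation lets me collapse the outer $\evalof{\vec v}{\cdot}$ into an evaluation of $(\diff{\vec f}{\vec v})_x$ against the vector $\vec w$ with components $\vec w_y = \evalof{\vec v}{({\diff{\vec f}{\vec v}}^i)_y}$. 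By the induction hypothesis, $\vec w_y = \bigsradd_{\tau_y \text{ of depth } i} \evalof{\vec v}{y \tau_y}$; distributing this sum through each depth-$1$ summand of $(\diff{\vec f}{\vec v})_x$ identifies the resulting term with $\evalof{\vec v}{x \tau_x}$ for the depth-$(i+1)$ substitution $\tau_x = \set{x \mapsto m'}$ obtained by inserting $y \tau_y$ into the distinguished occurrence of $y$ in $m$.

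The main technical step is setting up the correspondence between depth-$(i+1)$ substitutions $\tau_x$ and triples consisting of a monomial $m$ of $\vecat{f}{x}$, a distinguished occurrence of some variable $y$ in $m$, and a depth-$i$ substitution $\tau_y$. This correspondence is in general many-to-one (two different occurrences of the same variable may yield the same $m'$ after substitution), but idempotence of $\sradd$ absorbs the multiplicity. With the product-rule unfolding and this bookkeeping in place, the remainder of the argument is a routine rearrangement of sums justified by distributivity and $\omega$-continuity.
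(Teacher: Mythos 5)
Your proposal is correct and follows essentially the same route as the paper's proof: both rest on unfolding the product rule so that each summand of $\diff{\vecat{f}{x}}{\vec v}$ isolates one variable occurrence with everything else evaluated at $\vec v$, and on matching the $i$-fold differential with $i$-times nested linear monomial substitutions (then invoking Lemma~\ref{Lemma:LinearMonomial}). The only difference is presentational: you package the paper's two summand-level inequalities into a single depth-indexed identity and sum over $i$ at the end, with idempotence absorbing the many-to-one bookkeeping exactly as the paper does.
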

We now show how to construct a linear context-free grammar that represents the linear completion. 
The benefit over Theorem~\ref{Theorem:Differential} is that we are not bound to using differentials but have available the spectrum of language-theoretic techniques --- even for regular languages (Section~\ref{Section:Tensor}). 
By Lemma~\ref{Lemma:LinearMonomial}, the linear completion is (for each variable) the sum 
\begin{align*}
\completionof{\vecat{f}{x}} = \bigsradd_{\tau_x } x \tau_x \qquad\text{where $\tau_x$ has the form}\qquad\tau_x = \set{x \mapsto m_{i_x}\set{y \mapsto{m_{i_y}\set{\ldots}}}}\ .
\end{align*}
By the definition of linear substitutions, after $x\mapsto m_{i_x}$ the next substitution $y\mapsto m_{i_y}$ will be applied to a single occurrence of $y$ in $m_{i_x}$. 
The idea of the grammar construction is to highlight in each monomial the variable that will be replaced next.
To be precise, we even fix the occurrence of the variable that will be rewritten.
Given a monomial $m$ and an occurrence $z$ of a variable in $m$, there are unique monomials $\lof{m}{z}$ and $\rof{m}{z}$ so that
\begin{align}
m=\lof{m}{z}\srmult z\srmult\rof{m}{z}\ .\label{Equation:DecompositionNonCommutative}
\end{align}
We define the grammar to be
${\lingrammarof{0}
    = ( 
        \setcond{\iter{y}{1}}{y \in \vars} ,
                \vars\cup\sr,
        \bigcup_{y \in \vars} P_y \cup P)
}$.
We create a non-terminal (with index) for each variable.   
The terminals are the variables and the semiring elements.
The reason $\lingrammarof{0}$ has non-terminals $\iter{y}{1}$ is that 
we will see an exponential growth in the number of non-terminals during evaluation when we make the grammars explicit (Section~\ref{Section:Evaluation}). 
Every monomial $m$ of $\vecat{f}{y}$ and every occurrence $z$ of a variable in $m$ will induce a rule
that mimics the decomposition in Equation~\eqref{Equation:DecompositionNonCommutative}. 
Note that all variables in $\lof{m}{z}$ and $\rof{m}{z}$ are terminals, which reflects the fact that they will not be replaced by further linear substitutions. 
Moreover, note that a variable may have several occurrences in $m$, in which case we obtain several rules:
\begin{align*}
   P_y =
   \setcond{
       \iter{y}{1} \to \wordof{\lof{m_{i_y}}{z}} \concat \iter{z}{1} \concat \wordof{\rof{m_{i_y}}{z}}
   }
   {
       \vecat{f}{y} = \bigsradd_{i_y} m_{i_y},
       z \text{ an occurrence in }m_{i_y}    
   }
   \ .
\end{align*}
The productions $P=\setcond{\iter{y}{1} \to y}{y \in \vars}$ 
mimic the identity substitution.
We obtain a one-to-one correspondence between the linear substitutions applied to $x$ and the sentential forms derivable from $\iter{x}{1}$, denoted by $\langof{\lingrammarof{0}_{x}}$. 

\begin{proposition}
\label{Proposition:GrammarCompletion}
    $\completionof{\vecat{f}{x}} = \srof{\langof{\lingrammarof{0}_{x}}}$
    .
\end{proposition}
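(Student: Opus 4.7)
The plan is to apply Lemma~\ref{Lemma:LinearMonomial} to rewrite the left-hand side as $\completionof{\vecat{f}{x}} = \bigsradd_{\tau_x} x\tau_x$, and then match this sum term-by-term with $\srof{\langof{\lingrammarof{0}_x}} = \bigsradd_{w} \srof{w}$ ranging over terminal words $w \in (\vars \cup \sr)^*$ derivable from the non-terminal $\iter{x}{1}$. The engine of the proof is an explicit bijection between linear monomial substitutions $\tau_x$ associated with $\vec f$ and derivations in $\lingrammarof{0}$ starting at $\iter{x}{1}$, arranged so that $x\tau_x$ and $\srof{w}$ coincide as semiring elements for each matched pair. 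Because every production in $\lingrammarof{0}$ carries exactly one non-terminal on its right-hand side, derivations are linear and are fully determined by the sequence of rules applied.

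I would define the bijection by induction on the length of $\tau_x$, i.e.\ the number of non-identity substitutions chained in it. In the base case, $\set{x \mapsto x}$ is matched with the one-step derivation $\iter{x}{1} \Rightarrow x$ using the rule of $P$; both yield the monomial $x$. In the step case, any non-identity $\tau_x$ factors uniquely as $\set{x \mapsto g}$ with $g \in m_{i_x}\tau_y'$ obtained by replacing a distinguished occurrence $z$ of a variable $y$ in the summand $m_{i_x}$ of $\vecat{f}{x}$ by the result of $\tau_y'$. This data is exactly what the production $\iter{x}{1} \to \wordof{\lof{m_{i_x}}{z}} \concat \iter{y}{1} \concat \wordof{\rof{m_{i_x}}{z}}$ in $P_x$ encodes, so the bijection applies this rule first and then continues with the derivation recursively associated with $\tau_y'$ from $\iter{y}{1}$. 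Conversely, every derivation from $\iter{x}{1}$ begins with either the terminating rule of $P$ or a $(m_{i_x}, z)$-indexed rule of $P_x$, so the recipe is invertible.

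It remains to check that both sides yield the same semiring element. From the occurrence decomposition $m_{i_x} = \lof{m_{i_x}}{z} \srmult z \srmult \rof{m_{i_x}}{z}$ we get $x\tau_x = \srof{\lof{m_{i_x}}{z}} \srmult (y\tau_y') \srmult \srof{\rof{m_{i_x}}{z}}$, while the paired derivation produces the sentential form $\wordof{\lof{m_{i_x}}{z}} \concat w' \concat \wordof{\rof{m_{i_x}}{z}}$ where $w'$ is the word derived from $\iter{y}{1}$. Since $\srof{\cdot}$ sends concatenation to $\srmult$, the two values agree once the induction hypothesis identifies $\srof{w'}$ with $y\tau_y'$. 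The main piece of bookkeeping is that the same variable $y$ may occupy several positions inside $m_{i_x}$ and those positions must be treated as distinct choices; this is precisely why both the substitution definition and the set $P_x$ are parameterised by the occurrence $z$, which keeps the bijection one-to-one on both sides.
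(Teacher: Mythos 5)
Your proof is correct and is essentially the argument the paper intends: the paper offers no explicit proof beyond asserting the one-to-one correspondence between linear substitutions applied to $x$ and derivations from $\iter{x}{1}$, and your induction on the length of $\tau_x$ (after reducing to monomial substitutions via Lemma~\ref{Lemma:LinearMonomial}) is exactly the formalization of that correspondence, including the key bookkeeping that substitutions and productions are both indexed by variable occurrences. The only nitpick is that the rules in $P$ carry zero non-terminals on their right-hand side rather than one, but your conclusion that derivations are determined by their rule sequences still holds.
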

Computing information from $\langof{\lingrammarof{0}_{x}}$ is still non-trivial since we do not have a closed expression for the language. 
There are two special cases when $\langof{\lingrammarof{0}_{x}}$ is easy to evaluate. 
If $\sr$ is finite, also the set of functions $\sr^\vars \to \sr$ is finite.
In this setting, a Kleene iteration applied to  $\lingrammarof{0}_{x}$ (more precisely, a system of linear equations obtained from the grammar) is sufficient to determine a closed-form description of the linear completion.

If $\sr$ is commutative, the grammar construction can be modified to ensure left-linearity.
Indeed, Equation~\eqref{Equation:DecompositionNonCommutative} simplifies to the following unique representation of a monomial $m$ wrt. a variable $z$ (we no longer have to work with variable occurrences): 
\begin{align}
    m = z\srmult m^z \ .\label{Equation:DecompositionCommutative}
\end{align}
This in turn simplifies the transitions to $\iter{y}{1} \to \iter{z}{1}\concat \wordof{m_{i_y}^{z}}$\ . 

The left-linear grammar yields a closed representation of the linear completion as a regular expression over $\vars \cup \sr$, on which further evaluation steps can be performed. 
Actually, we only need the Parikh image of the language~\cite{Parikh1966}, which is a semilinear set and potentially more compact. 

\subsection{Evaluation}\label{Section:Evaluation}
To capture $\munchit{n+1}$, we show how to reflect $\evalof{ \munchit{n} }{ \munchit{n}}$ on grammar level.
Assume we have a grammar $\lingrammarof{n}$ with language $\munchit{n}$. 
Our construction will maintain the invariant that $\lingrammarof{n}$ has non-terminals of the form $\iter{y}{m}$ with $1\leq m\leq 2^n$. 
The terminals will always be $\vars\cup\sr$.
The grammar for $\munchit{n+1}$ will behave like $\lingrammarof{n}$ but invoke itself when it reaches a terminal $y$.
To invoke $y$, we have to turn the variable into a non-terminal. 
We create two copies of $\lingrammarof{n}$ and modify the indices in one of the copies.
This index shift in particular turns a former terminal $y$ into $\iter{y}{2^n}$, which is a non-terminal in the other grammar:
\begin{align*}
\lingrammarof{n+1} = \lingrammarof{n}\cup (\lingrammarof{n}+2^n)\ . 
\end{align*}
Formally, the \bfemph{index shift by $k\in\nat$} turns $\lingrammarof{n}$ into the grammar $\lingrammarof{n}+k$, where consistently all non-terminal indices are increased by $k$ and all terminals $y$ are turned into non-terminals $\iter{y}{k}$. 
To give an example, the production $\iter{y}{i}\to a\cdot x\cdot \iter{z}{i}$ from $\lingrammarof{n}$ will be turned into $\iter{y}{i+k}\to a\cdot\iter{x}{k}\cdot\iter{z}{i+k}$ in $\lingrammarof{n}+k$.
The union of the grammars is taken componentwise.
Let the sentential forms derivable from $\iter{x}{2^n}$ be denoted by $\langof{\lingrammarof{n}_{x}}$.
\begin{proposition}
\label{Proposition:LinGrammar1}
    For each $n \in \nat$, we have $\munchat{n}{x} = \srof{\langof{\lingrammarof{n}_x}}$\ .
\end{proposition}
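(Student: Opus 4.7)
The plan is to proceed by induction on $n$. The base case $n=0$ is immediate from Proposition~\ref{Proposition:GrammarCompletion}: since $2^0 = 1$, the start non-terminal $\iter{x}{2^0} = \iter{x}{1}$ is precisely the one used there, and $\munchat{0}{x} = \completionof{\vecat{f}{x}} = \srof{\langof{\lingrammarof{0}_x}}$.

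For the induction step, assume $\munchat{n}{y} = \srof{\langof{\lingrammarof{n}_y}}$ for every $y \in \vars$, and consider $n+1$. By definition, $\munchat{n+1}{x} = \evalof{\munchit{n}}{\munchat{n}{x}}$, so evaluation replaces every occurrence of a variable $y$ in (the words representing) $\munchat{n}{x}$ by $\munchat{n}{y}$. The key structural observation concerns $\lingrammarof{n+1} = \lingrammarof{n} \cup (\lingrammarof{n} + 2^n)$: by the index invariant, $\lingrammarof{n}$ uses non-terminal indices in $\{1,\ldots,2^n\}$, while the shifted copy uses indices in $\{2^n+1,\ldots,2^{n+1}\}$, so the two non-terminal sets are disjoint. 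Their only interface is that each former terminal $y\in\vars$ in the shifted copy has been turned into $\iter{y}{2^n}$, which coincides with the start non-terminal of $y$ in the unshifted copy.

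The plan is then to argue that every derivation in $\lingrammarof{n+1}$ from $\iter{x}{2^{n+1}}$ decomposes uniquely into an outer derivation inside $\lingrammarof{n}+2^n$ (which must terminate at a string $w$ over $\sr \cup \setcond{\iter{y}{2^n}}{y \in \vars}$, because no rule of the shifted copy introduces non-terminals of the unshifted one), followed by independent inner derivations inside $\lingrammarof{n}$ that rewrite each occurrence of $\iter{y}{2^n}$ into a terminal string from $\langof{\lingrammarof{n}_y}$. Under the renaming $\iter{y}{2^n}\leftrightarrow y$, the outer strings $w$ are in bijection with $\langof{\lingrammarof{n}_x}$. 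Applying distributivity, $\omega$-continuity, and the induction hypothesis yields
\begin{align*}
\srof{\langof{\lingrammarof{n+1}_x}}
\ =\ \bigsradd_{w\in\langof{\lingrammarof{n}_x}} \evalof{\munchit{n}}{\srof{w}}
\ =\ \evalof{\munchit{n}}{\srof{\langof{\lingrammarof{n}_x}}}
\ =\ \evalof{\munchit{n}}{\munchat{n}{x}}
\ =\ \munchat{n+1}{x}\ .
\end{align*}

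The main obstacle is formalising the unique outer/inner decomposition of derivations; everything else is a routine manipulation of sums over monomials. Once the disjointness of the non-terminal indices of the two copies is exploited to rule out any interleaving of shifted and unshifted productions (beyond the handover at symbols $\iter{y}{2^n}$), the evaluation identity falls out of the semiring axioms and the induction hypothesis.
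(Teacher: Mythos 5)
Your proof is correct. The paper itself omits a proof of this proposition, but your induction is exactly the argument the construction is designed to support: the base case is Proposition~\ref{Proposition:GrammarCompletion}, and in the step the disjointness of the non-terminal index ranges of $\lingrammarof{n}$ and $\lingrammarof{n}+2^n$ (with the handover at $\iter{y}{2^n}$) yields the standard outer/inner decomposition of derivation trees, after which distributivity, $\omega$-continuity, and the induction hypothesis give the evaluation identity.
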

To get from the Munchausen iteration to the Munchausen sequence, we need to evaluate the function $\munchit{n}$ at a vector of constants $\vec b$. 
This operation can also be performed on the grammar.
We treat the occurrences of $y$ as non-terminals instead of terminals and add the rules $y \to \vecat{b}{y}$ for every  variable $y \in \vars$.
Let the resulting grammar be $\iter{LG(\vec b)}{n}$.
%
%
\begin{proposition}
\label{Proposition:LinGrammar2}
    For each $n \in \nat$, we have 
    $
        {\munchsbit{n}}_{x}\ =\ \srof{\langof{\iter{LG(\vec b)}{n}_x}}
        \ .
    $
\end{proposition}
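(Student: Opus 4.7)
The plan is to reduce the statement to Proposition~\ref{Proposition:LinGrammar1} by showing that the transformation from $\lingrammarof{n}$ to $\iter{LG(\vec b)}{n}$ on the grammar side faithfully implements the evaluation $\evalof{\vec b}{-}$ on the semantic side.

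First, I would unfold the definition of the Munchausen sequence and apply Proposition~\ref{Proposition:LinGrammar1}:
\begin{align*}
    \munchsbit{n}_x \ =\ \evalof{\vec b}{\munchat{n}{x}} \ =\ \evalof{\vec b}{\srof{\langof{\lingrammarof{n}_x}}} \ =\ \evalof{\vec b}{\bigsradd_{w \in \langof{\lingrammarof{n}_x}} \srof{w}}\ .
\end{align*}
By $\omega$-continuity (Properties (2)--(4)) and the fact that evaluation at a fixed vector distributes over arbitrary sums of monomials, this equals $\bigsradd_{w \in \langof{\lingrammarof{n}_x}} \evalof{\vec b}{\srof{w}}$. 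For a single word $w \in (\vars \cup \sr)^*$, the monomial $\evalof{\vec b}{\srof{w}}$ is obtained by replacing each variable occurrence $y$ in $w$ by $\vecat{b}{y}$ (as a semiring element) and then reading the result as a product in $\sr$.

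Second, I would establish the grammar-level correspondence, namely the language identity
\begin{align*}
    \langof{\iter{LG(\vec b)}{n}_x} \ =\ \setcond{w[y \mapsto \vecat{b}{y}]}{w \in \langof{\lingrammarof{n}_x}}\ ,
\end{align*}
where $w[y \mapsto \vecat{b}{y}]$ denotes the word obtained from $w \in (\vars \cup \sr)^*$ by replacing every variable occurrence $y$ by the semiring element $\vecat{b}{y}$. The inclusion $\supseteq$ is immediate: any derivation of $w$ in $\lingrammarof{n}_x$ can be extended by applying the new rules $y \to \vecat{b}{y}$ at each remaining variable leaf. For the inclusion $\subseteq$, I would argue that in any derivation in $\iter{LG(\vec b)}{n}_x$, the productions $y \to \vecat{b}{y}$ can be postponed to the very end (they do not feed back into any other production, since the right-hand side is a pure semiring element); removing these final steps yields a derivation in $\lingrammarof{n}_x$ of some word $w$, and reinstating them yields exactly $w[y \mapsto \vecat{b}{y}]$.

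Combining the two, we obtain
\begin{align*}
    \munchsbit{n}_x \ =\ \bigsradd_{w \in \langof{\lingrammarof{n}_x}} \srof{w[y \mapsto \vecat{b}{y}]} \ =\ \bigsradd_{w' \in \langof{\iter{LG(\vec b)}{n}_x}} \srof{w'} \ =\ \srof{\langof{\iter{LG(\vec b)}{n}_x}}\ ,
\end{align*}
which is the desired identity. The main (minor) obstacle is the grammar-level equality: one has to argue carefully that the new productions $y \to \vecat{b}{y}$ can always be commuted to the end of a derivation, using the fact that $\vecat{b}{y} \in \sr$ is a terminal in $\iter{LG(\vec b)}{n}$ and thus cannot trigger any further rule. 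Everything else is bookkeeping plus one appeal to $\omega$-continuity.
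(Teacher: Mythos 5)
Your argument is correct and is exactly the intended justification: the paper does not spell out a proof of this proposition, and its surrounding text only remarks that evaluation ``can also be performed on the grammar'' by adding the rules $y \to \vecat{b}{y}$, so your reduction to Proposition~\ref{Proposition:LinGrammar1} via $\omega$-continuity of pointwise evaluation, together with the observation that the new productions rewrite a fresh non-terminal into a pure terminal and can therefore be postponed to the end of any derivation, is the natural way to make this precise. The only point worth adding is that the substitution map $w \mapsto w[y \mapsto \vecat{b}{y}]$ is surjective but need not be injective, so the final re-indexing of the sum over $\langof{\iter{LG(\vec b)}{n}_x}$ implicitly uses idempotence of $\sr$, which is assumed throughout the paper.
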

The grammars $\lingrammarof{n}$ have productions of the same shape that only differ in the index~$n$.
We exploit this to give a more compact representation of the language by an \bfemph{indexed grammar}.
Indexed grammars annotate the non-terminals in the productions with a stack.

The indexed grammars $\indgram$ we define uses the same non-terminals and terminals as $\lingrammarof{0}$.
The stack $s \in 1^*0$ encodes the index in unary.
The set of production rules is $\bigcup_{y \in \vars } R_y \cup R$.
As in $P_y$, the productions in $R_y$ start in $\iter{y}{1}$ and single out one occurrence $\iter{z}{1}$ of a variable in a monomial of $\vecat{f}{y}$.
When using the rule, the stack $[ 1.s ]$ of $\iter{y}{1}$ is passed to $\iter{z}{1}$. 
Also the other variables are treated as non-terminals. 
For them, the stack height is decreased by one.
Formally, for each occurrence $z$ of a variable in a monomial $m_{i_y}$ of $\vecat{f}{y}$, the set $R_y$ has a rule
\begin{align*}
    \iter{y}{1} [ 1.s ] \to \wordof{\lof{m_{i_y}}{z}} [s]\ \concat\ \iter{z}{1} [1.s] \ \concat\ \wordof{\rof{m_{i_y}}{z}} [s]\ .
\end{align*}
%
The set $R$ contains a rule for each variable that replaces the non-terminal version by the terminal version if the stack is empty, $
    R = \setcond
    {
        \iter{y}{1}[ 0 ] \to y
    }
    {
        y \in \vars
    }
    \ .
$

We define $L(\iter{\indgram}{n}_x)$ to be the set of sentential forms derivable in $\indgram$ from $\iter{x}{1} [ 2^n]$, i.e. with the unary encoding of $2^n$ as initial stack content.
Obviously, \mbox{$L(\iter{\indgram}{n}_x) = L(\iter{LG}{n}_x)$}, and we can also perform the evaluation by adding rules as for $\iter{LG}{n}$.
This allows us to phrase the Propositions \ref{Proposition:LinGrammar1} and \ref{Proposition:LinGrammar2} in terms of the indexed grammar $\indgram$.

\subsection{Tensor Semirings}\label{Section:Tensor}
Left-linear grammars are preferrable over linear context-free ones for the better algorithmics they support (see below). 
We show that we can work with left-linear grammars also in the case of non-commutative io-semirings. 
To this end, we adapt the recent work~\cite{Reps2016}. 
Reps~et~al. have shown that --- 
provided the semiring of interest has an associated tensor-product semiring --- every system of linear equations over the semiring can be transformed to a left-linear system over the tensor-product semiring. 
One important example where a tensor-product semiring exists is predicate abstraction~\cite{Reps2016}. 

\begin{definition}
    We call an io-semiring $\sr$ \bfemph{admissible}, if there is a transpose operation, an associated tensor-product semiring, and a readout operation.

    The \bfemph{transpose} $\cdot^t : \sr \to \sr$ should satisfy
    \begin{align*}
        (a \sradd b)^t &= a^t \sradd b^t
        & (a \srmult b)^t &= b^t \srmult a^t
        & (a^t)^t &= a
        \ .
    \end{align*}
    A \bfemph{tensor-product semiring} $\tensor{\sr}$ is an 
    io-semiring $(\tensor{\sr}, \tadd, \tmult, \tensor{0}, \tensor{1})$ together with a map $\tp : \sr \times \sr \to \tensor{\sr}$
    such that
    \begin{align*}
        0 \tp a = a \tp 0 &= \tensor{0}
        & (a \tp b) \tmult (c \tp d) &= (a \srmult c) \tp (b \srmult d)\\
        a \tp (b \sradd c) &= (a \tp b) \tadd (a \tp c)
        & (b \sradd c) \tp a &= (b \tp a) \tadd (c \tp a)
        \ .
    \end{align*}
    The \bfemph{readout} operation $\readout : \tensor{\sr} \to \sr$ should satisfy (with $I$ finite or countable)
    \begin{align*}
        \roof { a \tp b } &= a^t \srmult b
        & \roof { \bigsradd_{i \in I} p_i } &= \bigsradd_{i \in I} \roof{p_i}\ .
    \end{align*}
    %
\end{definition}
The crucial requirement is the existence of a readout operation that distributes over sums without producing cross terms.
It is, for example, not met by the language semiring.

Consider a system of linear equations over an admissible semiring $\sr$ of the form
\begin{align}
\label{EquationSysLinear}
    x_i &= c_i \sradd
    \underset{j = 1, \ldots, k}{\bigsradd} a_{i,j} \srmult x_j \srmult b_{i,j}
    \quad \text{ for } i = 1, \ldots, k\ .
\end{align}
Reps et al. define its \bfemph{regularization} to be the left-linear system over the associated tensor-product semiring $\tensor{\sr}$:
\begin{align}
\label{EquationSysLeftLinear}
    y_i &= ( 1^t \tp c_i)  \tadd 
    \underset{j = 1, \ldots, k}
    {
        \mathlarger
        \textstyle{\bigsradd}_{\mathcal{T}}
    } 
    y_j \tmult (a_{i,j}^t \tp b_{i,j})
    \quad \text{ for } i = 1, \ldots, k\ .
\end{align}
Their main result shows that the least solution to (\ref{EquationSysLinear}) can be obtained from the least solution to (\ref{EquationSysLeftLinear}) by applying the readout operation.
\begin{theorem}[Reps et al. \cite{Reps2016}]
\label{Theorem:Regularization}
    Let $\vec v$ be the least solution to (\ref{EquationSysLeftLinear}).
    Then $\roof{\vec v}$ is the least solution to (\ref{EquationSysLinear}).
\end{theorem}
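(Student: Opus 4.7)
The plan is to compare Kleene iterates of the two systems component by component and then pass to the limit using $\omega$-continuity of the readout. Let $\vec x^{(n)}$ and $\vec v^{(n)}$ denote the $n$-th Kleene iterates of~(\ref{EquationSysLinear}) and~(\ref{EquationSysLeftLinear}) from $\vec 0$ and $\vec{\tensor{0}}$ respectively. Since both $\sr$ and $\tensor{\sr}$ are $\omega$-continuous, Kleene's theorem gives $\mu \vec x = \sup_n \vec x^{(n)}$ and $\mu \vec v = \sup_n \vec v^{(n)}$. The heart of the argument is to show $\roof{\vec v^{(n)}} = \vec x^{(n)}$ for every $n \in \nat$ by induction, which delivers the theorem after taking suprema.

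The inductive step hinges on a \emph{readout lemma}: for every element $p \in \tensor{\sr}$ expressible as a countable sum of simple tensors $p = \bigsradd_k u_k \tp w_k$, one has $\roof{p \tmult (a^t \tp b)} = a \srmult \roof{p} \srmult b$. This follows by combining the tensor-product multiplication rule $(u_k \tp w_k) \tmult (a^t \tp b) = (u_k \srmult a^t) \tp (w_k \srmult b)$, the readout axioms $\roof{u \tp w} = u^t \srmult w$ and $\roof{\bigsradd_i p_i} = \bigsradd_i \roof{p_i}$, and the transpose identities $(u \srmult a^t)^t = a \srmult u^t$ and $(a^t)^t = a$, after which $a$ and $b$ factor out by distributivity in $\sr$. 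A secondary induction establishes that every Kleene iterate $\vec v^{(n)}_j$ actually lies in the sub-semiring of sums of simple tensors: the base case is $\vec v^{(0)}_j = 0 \tp 0 = \tensor{0}$, and the step case is immediate because $1^t \tp c_i$ is a simple tensor and the multiplication rule keeps every product $\vec v^{(n)}_j \tmult (a_{i,j}^t \tp b_{i,j})$ within the same class.

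Granting these two facts, the induction step is a short calculation: applying readout to the defining equation and distributing over $\tadd$ yields $\roof{\vec v^{(n+1)}_i} = \roof{1^t \tp c_i} \sradd \bigsradd_j \roof{\vec v^{(n)}_j \tmult (a_{i,j}^t \tp b_{i,j})} = c_i \sradd \bigsradd_j a_{i,j} \srmult \vec x^{(n)}_j \srmult b_{i,j} = \vec x^{(n+1)}_i$, where the first reduction uses the readout lemma together with $\roof{1^t \tp c_i} = 1 \srmult c_i = c_i$ and the second invokes the induction hypothesis. To pass to the limit, observe that the chain $(\vec v^{(n)})_n$ is ascending in the idempotent semiring $\tensor{\sr}$, so its supremum coincides with the countable sum $\bigsradd_n \vec v^{(n)}$; readout-linearity then gives $\roof{\mu \vec v} = \bigsradd_n \roof{\vec v^{(n)}} = \sup_n \vec x^{(n)} = \mu \vec x$.

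The main obstacle is securing the readout lemma in the precise generality needed: the axioms of $\tensor{\sr}$ only describe readout on simple tensors and their sums, not on arbitrary elements, so the proof relies crucially on the secondary induction that certifies the Kleene iterates never leave this subclass. Once that closure property is in place, the remainder is straightforward bookkeeping with the tensor-product, transpose, and readout axioms.
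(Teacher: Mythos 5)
Your proposal is correct, but note that the paper itself offers no proof of this statement: Theorem~\ref{Theorem:Regularization} is imported verbatim from Reps et al.~\cite{Reps2016} as a black box, so there is no in-paper argument to compare against. Judged on its own, your derivation is sound and works entirely from the axioms of admissibility as stated in the paper. The two load-bearing points are exactly the right ones. First, the identity $\roof{p \tmult (a^t \tp b)} = a \srmult \roof{p} \srmult b$ does follow from $(u \tp w) \tmult (a^t \tp b) = (u \srmult a^t) \tp (w \srmult b)$, $(u \srmult a^t)^t = a \srmult u^t$, and countable additivity of $\roof{\cdot}$ together with $\omega$-continuity of $\srmult$ in $\sr$ --- but, as you correctly observe, only for $p$ that is a (countable) sum of simple tensors, since the axioms pin down $\roof{\cdot}$ on nothing else. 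Second, your closure induction showing that every Kleene iterate of~(\ref{EquationSysLeftLinear}) stays inside the class of sums of simple tensors is therefore not optional bookkeeping but the step that makes the readout lemma applicable; without it the argument would have a genuine hole. The limit passage is also fine: in an idempotent $\omega$-continuous semiring the supremum of the ascending Kleene chain equals the countable sum $\bigsradd_n \vec v^{(n)}$, so readout commutes with it by the additivity axiom, and Kleene's theorem identifies both suprema with the respective least solutions. One could quibble that you should state explicitly that $\roof{\tensor{0}} = 0$ for the base case (immediate from $\tensor{0} = 0 \tp 0$ and $\roof{0 \tp 0} = 0^t \srmult 0 = 0$), but this is cosmetic. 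In short: a correct, self-contained proof of a theorem the paper only cites.
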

The importance of the result stems from the fact that systems of  left-linear equations (\ref{EquationSysLeftLinear}) enjoy efficient algorithmics.
For example, Tarjan's path-expression algorithm~\cite{Tarjan1981} can be applied to (\ref{EquationSysLeftLinear}) to obtain for every $y_i$ a regular expression (over the tensor-product semiring) capturing the least solution.
We discuss how to use this in our setting.

Consider the system of linear equations for the linear completion that is obtained from $\lingrammarof{0}$. 
Let $\tunchit{0}$ denote its regularization.
With Tarjan's algorithm, we obtain for $\tunchit{0}$ a regular expressions over the tensor semiring and $\vars$. 
As a consequence of Theorem~\ref{Theorem:Regularization} and Proposition~\ref{Proposition:GrammarCompletion}, we have $\completionof{\vec f} =  \roof{\tunchit{0}}$.

One would also like to carry out the evaluation process over the tensor-product semiring.
Unfortunately, $\tunchit{0}$ is a regular expression with variables denoting elements from $\sr$, namely those occurrences that were treated as terminals by the grammar. 
Therefore, we cannot evaluate $\tunchit{0}$ at $\tunchit{0}$, but only at $\roof{ \tunchit{0}}$.
We define $\tunchit{n+1} = \evalof{ \roof{ \tunchit{n} }} { \tunchit{n}}$. 
Using Theorem \ref{Theorem:Regularization} and induction, we get 
$\munchit{n} = \roof{ \tunchit{n}   }$
for all $n \in \nat$.
For an implementation, the idea would be to nevertheless insert the tensor element $\tunchit{0}$ and define a recursive readout.



\section{Discussion}\label{Section:Discussion}
We gave a new iteration scheme for solving polynomial equations over $\omega$-continuous and idempotent semirings.
The key idea is to solve the equations over the semiring of functions rather than the semiring of interest and only evaluate the resulting function when needed.
We showed that the method is exponentially faster than the well-known Newton sequence~\cite{EsparzaKieferLuttenberger2010}, and that we can obtain symbolic descriptions for the solutions.
The descriptions can be understood as identifying maximal sharing in the derivation trees of context-free grammars. 

Unfortunately, we do not yet know how to handle these descriptions.
If we give them explicitly as linear context-free grammars, semilinear sets, or regular expressions over the tensor semiring, the size of the description doubles in every step. 
Hence, we buy an exponential improvement in time at the cost of an exponential blow up in space.
This still means we compute a description of size $n$ in $\mathit{log}\ n$ steps. 
Experiments will have to tell how this compares to Newton iteration that, for the same result, needs $n$ steps 
but where the objects are semiring values rather than grammars.

The descriptions we obtain are structured to an extent that allows us to represent them symbolically, by a restricted class of indexed grammars (over linear context-free grammars, semilinear sets, or regular expressions).
With restricted indexed grammars, the iteration steps of Munchausen are easy to compute.
The drawback is that, so far, we do not know how to extract information from the restricted indexed grammars.
As future work, we plan to understand how to compute in such highly symbolic structures.

\section*{Acknowledgments}
We thank Stefan Kiefer for helpful discussions.

\bibliography{cited}

\newpage

\appendix


\section{Proofs of Section \ref{Section:Acceleration}}

\begin{proof}[Proof of Lemma \ref{Lemma:LinearMonomial}]
    Inequality $\geq$ is immediate by the fact that for every linear monomial substitution, there is a linear polynomial substitution that always inserts the full polynomial rather than one of its monomials.
    To show $\leq$, assume that every substitution in $\sigma_x$ applies to precisely one position so that $f \sigma_x$ is a single function and not a set.
    The general case follows since we do not make an assumption of where the application occurs.
    Substitution $\sigma_x$ has the shape
    \begin{align*}
    \set{
        x\mapsto \vecat{f}{x}
        \set{
            y \mapsto \vecat{f}{y}
            \set{
                \ldots
                \set{
                    z\mapsto \vecat{f}{z}
                }
                \ldots
            }
        }
    }\ .
    \end{align*}
    If we assume $\vecat{f}{x} =\bigsradd_{i_x}m_{i_x}$ and similarly for $\vecat{f}{y}$ and $\vecat{f}{z}$, then $x \sigma_x$ is of the form
    \begin{align*}
    \bigsradd_{i_x \neq i} m_{i_x}
    \sradd 
    \Big(
    m_{i, 1} \srmult
    \Big(
    \bigsradd_{i_y \neq j} m_{i_y}\sradd
    \big(
    m_{j, 1}\srmult
    (
    \ldots \srmult (\bigsradd_{i_z} m_{i_z}) \srmult \ldots 
    )
    \srmult m_{j, 2}
    \big)
    \Big)
    \srmult m_{i, 2}
    \Big)\ .
    \end{align*}
    Distributivity yields
    \begin{align*}
    \bigsradd_{i_x \neq i} m_{i_x} \sradd 
    \bigsradd_{i_y\neq j} (m_{i, 1}\srmult m_{i_y}\srmult m_{i, 2})\sradd\ldots \sradd
    \bigsradd_{i_z} (m_{i, 1}\srmult m_{j, 1}\srmult \ldots \srmult m_{i_z}\srmult \ldots \srmult m_{j, 2}\srmult m_{i, 2})\ .
    \end{align*}
    Each of these monomials is obtained by applying a linear monomial substitution to $x$, ranging from the identity substitution $\set{x \mapsto x}$ (not shown) over inserting some $m_{i_x}$ from $\vecat{f}{x}$ to a long substitution that ends with a monomial $m_{i_z}$. 
    Since all these substitutions are covered by the sum over $\tau_x$, we obtain the desired inequality.
\end{proof}

\begin{proof}[Proof of Theorem~\ref{Theorem:Convergence}(2)]
    By monotonicity of $\mathit{eval}$ in the evaluation point, we obtain the inequality
    \mbox{$\evalof{\vec b}{\munchit{n}} \geq \evalof{\vec a}{\munchit{n}}$}
    for all $n \in \nat$.
    Together with Theorem~\ref{Theorem:Convergence}(1), this yields
    \begin{align*}
    \sup_{n \in \nat} \evalof{\vec b}{\munchit{n}}
    \geq
    \sup_{n \in \nat} \evalof{\vec a}{\munchit{n}}
    = \mu \vec p\ . 
    \end{align*}
    
It remains to establish $\sup_{n \in \nat} \evalof{\vec b}{\munchit{n}} \leq \mu \vec p$.
We show that this inequality holds for $\vec b = \mu \vec p$. 
The desired statement then follows by monotonicity. 
In fact, it is enough to show that \mbox{$\evalof{\mu \vec p}{\munchit{n}} \leq \mu \vec p$} for all $n\in\nat$.
    We proceed by induction on $n$.
    
    In the base case, we establish 
    \mbox{$
        \evalof{ \mu \vec p}{ \completionof{\vecat{f}{x}}} \leq (\mu \vec p)_x
    $}
    simultaneously for all variables.
    Using Lemma~\ref{Lemma:LinearMonomial}, it sufficient to prove
    \mbox{$\evalof{ \mu \vec p}{  \bigsradd_{\tau_x} x\tau_x  } \leq (\mu \vec p)_x$}, where $\tau_x$ ranges over all linear monomial substitutions for $x$.
    
    We first show that for all $x$, for all monomials $m_i$ of
    $\vecat{f}{x} = \bigsradd_{i_x} m_{i_x}$,
    and for all $\tau_y$ where $y$ occurs in $m_i$, 
    we have that 
    $\evalof{ \mu \vec p}{g} \leq (\mu \vec p)_x$ for every
    $g \in m_i \tau_y$. 
    We proceed by an induction on the structure of $\tau_y$.
    In the base case, $\tau_y = \set{ y \mapsto y}$, $g = m_i$, and thus
    \begin{align*}
        \evalof{ \mu \vec p}{g}
        &= \evalof{ \mu \vec p}{m_i} \\
        &\leq \evalof{ \mu \vec p}{m_i}
        \sradd \evalof{ \mu \vec p}{ \bigsradd_{i_x \neq i} m_{i_x}}
        \sradd \vecat{a}{x}
        = \evalof{ \mu \vec p}{\vecat{p}{x}}
        = (\mu \vec p)_x
        \ .
    \end{align*}
    The last equality holds as $\mu \vec p$ is a fixed point of $\vec p$.
    Assume $\tau_y = \set{ y \mapsto h}$ with $h \in m_{i_y} \tau_z$.
    Let furthermore $m_i = m_1 \srmult y \srmult m_2$ so that $y$ is replaced to obtain $g$. 
    We have
    \begin{align*}
        \evalof{ \mu \vec p}{g}
        &= \evalof{ \mu \vec p}{m_1 \srmult h \srmult m_2 } \\
        &= \evalof{ \mu \vec p}{m_1}
        \srmult
        \evalof{ \mu \vec p}{h}
        \srmult
        \evalof{ \mu \vec p}{m_2}\\
        &\leq \evalof{ \mu \vec p}{m_1}
        \srmult
        (\mu \vec p)_y
        \srmult
        \evalof{ \mu \vec p}{m_2}\\
        &=  \evalof{ \mu \vec p}{m_1}
        \srmult
        \evalof{ \mu \vec p}{y}
        \srmult
        \evalof{ \mu \vec p}{m_2}\\
        &= \evalof{ \mu \vec p} {m_1 \srmult y \srmult m_2}\\
        &= \evalof{ \mu \vec p} {m_i}
        \leq (\mu \vec p)_x 
        \ .
    \end{align*}
    The first inequality is by the induction hypothesis combined with monotonicity, the second inequality is proven in the base case.
    
We can now derive
    $\evalof{ \mu \vec p}{  \bigsradd_{\tau_x} x\tau_x  } \leq (\mu \vec p)_x$
    by showing
    $\evalof{\mu \vec p}{x\tau_x} \leq (\mu \vec p)_x$
    for all $\tau_x$ and by using idempotence.
    If $\tau_x = \set{ x \mapsto x}$, we have
    $\evalof{\mu \vec p}{x\tau_x}  = \evalof{\mu \vec p}{x} = (\mu \vec p)_x$.
    If the substitution is $\tau_x = \set{ x \mapsto g}$ with $g \in m_{i_x} \tau_y$, we use the statement proven above to conclude
    $\evalof{\mu \vec p}{x\tau_x} = \evalof{ \mu \vec p}{g} \leq (\mu \vec p)_x$.
    
    Let us now assume that the statement holds for $n$.
    By definition and associativity, we get
    $
    \evalof{\mu \vec p}{ \munchit{n+1} }
    =
    \evalof{\mu \vec p}{ \evalof{\munchit{n}}{\munchit{n} }}
    =
    \evalof{ \evalof{\mu \vec p}{\munchit{n}} }{\munchit{n}}
    $.
    Using the induction hypothesis together with monotonicity, this is at most $\evalof{\mu \vec p}{\munchit{n}}$.
    Applying the induction hypothesis again yields the desired inequality.
\end{proof}


\section{Proofs of Section \ref{Section:Algorithmics}}

\begin{proof}[Proof of Theorem \ref{Theorem:Differential}]
    We establish 
    $
    \evalof{\vec v}{\completionof{\vecat{f}{x}}} 
    =
    \evalof{\vec v}{
        ({\diff{\vec f}{\vec v}}^*)_x
    }
    $
    simultaneously for all components. 
    Using Lemma~\ref{Lemma:LinearMonomial}, we have
    $\completionof{\vecat{f}{x}} = \bigsradd_{\tau_x} x \tau_x$,
    where $\tau_x$ ranges over all linear monomial substitutions for $x$.  
    Recall the definitions
    \begin{numcases}{ \diffwrt{p}{\vec v}{x} = }
    \bigsradd_{i \in I} \diffwrt{m_i}{\vec v}{x} 
    & $\text{ if } p = \bigsradd_{i \in I} m_i$
    \ , \label{differential1} \\
    ( \diffwrt{g}{\vec v}{x} \srmult \evalof{\vec v}{h} )
    \sradd
    ( \evalof{\vec v}{g} \srmult \diffwrt{h}{\vec v}{x} )
    & $\text{ if } p = g \srmult h$
    \ , \label{differential2} \\
    0 & $\text{ if } p \in S \text{ or } p \in \vars \setminus \set{x}$
    \ , \label{differential3} \\
    x & $\text{ if } p = x$
    \ \label{differential4} .
    \end{numcases}
    and
    \begin{align*}
    {\diff{\vec f}{\vec v}}^* &= \bigsradd_{i \in \nat} {\diff{\vec f}{\vec v}}^i,
    \quad \text{ where}
    &{\diff{\vec f}{\vec v}}^0 &= \mathit{id},
    &{\diff{\vec f}{\vec v}}^{i+1} &= \evalof{ {\diff{\vec f}{\vec v}}^{i}  }{ \diff{\vec{f}}{\vec v}}
    \ .
    \end{align*}
    We start by proving  $\leq$.
    First note that for $\tau_x = \set{x \mapsto x}$ with $x \tau_x = x$ we also have the summand
    $({\diff{\vec f}{\vec v}}^0)_x = \mathit{id}_x = x$ in ${\diff{\vec{f}}{\vec v}}^*$. 
    In general, by summand we mean a part of the sum that forms the differential. 
    To complete this part of the proof, we show by induction that for every $m_x \tau_y$, there is an $i$ and a summand $s$ of
    ${\diff{\vecat{f}{x}}{\vec v}}^i$ such that they evaluate to the same result under $\vec v$.
    Let us write $m_x = m_1 \srmult y \srmult m_2$, where $y$ is the occurrence that will be replaced by $\tau_y$.
    
    In the base case, let $\tau_y = \set{ y \mapsto y}$ and thus $m_x \tau_y = m_x$.
    Recall that
    \mbox{$({\diff{\vec f }{\vec v}}^1)_x = \diff{\vecat{f}{x}}{\vec v}$}
    is defined by summing up the differentials with respect to the single variables.
    We consider the differential with respect to variable $y$ and the summand that we get by selecting \mbox{monomial $m_x$} (Part (\ref{differential1}) of the Definition).
    This summand itself is a sum obtained by the application of the product rule (Part (\ref{differential2})) to $m_x$.
    Note that fully unfolding the product rule means that the base case (Parts (\ref{differential3}) and \ref{differential4})) is applied to one single symbol in $m_x$, and all other symbols are evaluated at $\vec v$.
    We consider the summand
    \mbox{$s = \evalof{\vec v}{m_1} \srmult \diffwrt{y}{\vec v}{y} \srmult \evalof{\vec v}{m_2}$}
    that is obtained by evaluating all symbols but $y$.
    The differential of $y$ with respect to $y$ is again $y$, so we get 
    \mbox{$s = \evalof{\vec v}{m_1} \srmult y \srmult \evalof{\vec v}{m_2}$.}
    This shows that the summand is evaluated to
    \begin{align*}
    \evalof{ \vec v} {\evalof{\vec v}{m_1} \srmult y \srmult \evalof{\vec v}{m_2}}
    &=
    \evalof{\vec v}{m_1} \srmult \evalof{ \vec v}{y} \srmult \evalof{\vec v}{m_2}\\
    &=
    \evalof{\vec v}{m_1 \srmult y \srmult m_2 }\\
    &=
    \evalof{\vec v}{m_x}
    \ .
    \end{align*}
    Let us now consider $\tau_y = \set{ y \mapsto g}$, with $g \in m_y \tau_z$ (where $m_y$ is a monomial of $\vecat{f}{y}$).
    By induction, there is an $i$ and a summand $s'$ of $({\diff{\vec f}{\vec v}}^i)_y$ such that evaluating $s'$ and $g$ leads to the same result.
    We look at $({\diff{\vec f}{\vec v}}^{i+1})_x = \evalof{ {\diff{\vec f}{\vec v}}^{i}  }{ \diff{\vecat{f}{x}}{\vec v}}$.
    We consider each summand
    \mbox{$s = \evalof{\vec v}{m_1} \srmult y \srmult \evalof{\vec v}{m_2}$}
    of $\diff{\vecat{f}{x}}{\vec v}$ as in the base case.
    Evaluating this summand at ${\diff{\vec f}{\vec v}}^{i}$ will evaluate $y$ to the sum
    $({\diff{\vec f}{\vec v}}^{i})_y$ containing $s'$.
    Using distributivity yields a new sum containing the summand given by evaluating $s$ at the summand $s'$ of ${\diff{\vecat{f}{x}}{\vec v}}^{i}$.
    Using the assumption that $s$ evaluates to $g$, this evaluates to
    \begin{align*}
    \evalof{ \vec v} {\evalof{\vec v}{m_1} \srmult s' \srmult \evalof{\vec v}{m_2}}
    &=
    \evalof{\vec v}{m_1} \srmult \evalof{ \vec v}{s'} \srmult \evalof{\vec v}{m_2}\\
    &=
    \evalof{\vec v}{m_1} \srmult \evalof{ \vec v}{g} \srmult \evalof{\vec v}{m_2}\\
    &=
    \evalof{\vec v}{m_1 \srmult g \srmult m_2}\\
    &=
    \evalof{\vec v}{m \tau_y}
    \ .
    \end{align*}
    To show $\geq$, we argue that summands of the polynomial defining $({\diff{\vec{f}}{\vec v}}^{i})_x$ correspond to substitutions applied to $x$.
    For $({\diff{\vec{f}}{\vec v}}^{0})_x = id_x$, we can select the substitution $\set{ x \to x}$.
    
    We will show that for any  $i > 0$ and any summand $s$ of $({\diff{\vecat{f}{x}}{\vec v}}^{i})_x$, there is a monomial $m_x$ of $\vecat{f}{x}$, a substitution $\tau_y$ and $g \in m_x \tau_x$ such that $s$ and $g$ evaluate to the same result.
    
    In the base case, note that $({\diff{\vecat{f}{x}}{\vec v}}^1)_x = {\diff{\vecat{f}{x}}{\vec v}}$ is a sum of the $\diffwrt{\vecat{f}{x}}{\vec v}{y}$ for all $y \in \vars$.
    Let us fix some $y$, then $\diffwrt{\vecat{f}{x}}{\vec v}{y}$ is a sum with the summands corresponding to the monomials of $\vecat{f}{x}$ (Part (\ref{differential1})).
    If some monomial $m_x$ does not contain $y$, all unfoldings of the product rule will have $0$ as a factor.
    Analogously, unfolding the product rule such that the differential is applied to a symbol other than $y$ will result in $0$.
    Let us fix an unfolding of the product rule not resulting in $0$, and let $m_x = m_1 \srmult y \srmult m_2$ be the corresponding decomposition of $m_x$.
    The corresponding summand of $\diffwrt{\vecat{f}{x}}{\vec v}{y}$ is $\evalof{\vec v}{m_1} \srmult y \srmult \evalof{\vec v}{m_2}$.
    With an argumentation analogous to the one used in the first part of the proof, this evaluates just as the element of $m_x \set{y \to y}$ does, where the substitution is applied to the occurrence of $y$ as in the decomposition.
    
    Now let us consider  $({\diff{\vec f}{\vec v}}^{i+1})_x = \evalof{ {\diff{\vec f}{\vec v}}^{i}  }{ \diff{\vecat{f}{x}}{\vec v}}$.
    Every summand of $({\diff{\vec f}{\vec v}}^{i+1})_x$ corresponds to evaluating a summand $s$ of ${\diff{\vecat{f}{x}}{\vec v}}$ at
    ${\diff{\vec f}{\vec v}}^{i}$.
    As in the base case, we can assume that $s$ has shape $\evalof{\vec v}{m_1} \srmult y \srmult \evalof{\vec v}{m_2}$.
    Evaluating $s$ will result in
    \mbox{$\evalof{\vec v}{m_1} \srmult ({\diff{\vec f}{\vec v}}^{i})_y \srmult \evalof{\vec v}{m_2}$.}
    Using distributivity, we get a large sum in which one single summand corresponds to evaluating $s$ at a summand $s'$ of $({\diff{\vec f}{\vec v}}^{i})_y$.
    By induction, there is a monomial $m_y$ of $\vecat{f}{y}$, a substitution $\tau_z$ and $h \in m_y \tau_z$ such that $s'$ and $h$ evaluate to the same result.
    We consider the substitution $\tau_y = \set{ y \mapsto h}$, and $g \in m_x \tau_y$, where the substitution applies to the same occurrence of $y$ as in $s$.
    Using that $s'$ and $h$ evaluate to the same result, we get
    \begin{align*}
    \evalof{\vec v} { \evalof{\vec v}{m_1} \srmult s' \srmult \evalof{\vec v}{m_2} }
    &=
    \evalof{\vec v}{m_1} \srmult \evalof{\vec v}{s'} \srmult \evalof{\vec v}{m_2}\\
    &=
    \evalof{\vec v}{m_1} \srmult \evalof{\vec v}{h} \srmult \evalof{\vec v}{m_2}\\
    &=
    \evalof{\vec v}{m_1 \srmult h \srmult m_2}\\
    &=
    \evalof{\vec v}{g}
    \ .
    \end{align*}  
\end{proof}


\section{Decorated Derivation Trees}
\label{Appendix:Decorated}

The nodes in the derivation trees from \cite{EsparzaKieferLuttenberger2010} are \bfemph{decorated}: They are not only labeled by a symbol, but also by the rule that was used to derive the symbol.
Let $\dtreesparof{x}{n}$ denote the set of all decorated complete derivation trees of dimension at most $n$.
One derivation tree as defined in our setting might correspond to several derivation trees with different additional labels.
We obtain each tree in $\ctreesparof{x}{n}$ by projecting all labels of a tree in $\dtreesparof{x}{n}$ to the first component,
and every tree in $\dtreesparof{x}{n}$ can be projected to a tree in $\ctreesparof{x}{n}$. 
Since the $\mathit{yield}$ function ignores the additional labels and since we assume idempotence, we end up with the same result if we sum up the yields of all undecorated trees:
\begin{align*}
    \bigsradd_{t \in \dtreesparof{x}{n} } \yieldof{t}
    &= \bigsradd_{t \in \ctreesparof{x}{n} }
    \bigsradd_{\substack{t' \in \dtreesparof{x}{n}, \\ \mathit{project}(t') = t}}
    \yieldof{t'}\\
    &=
    \bigsradd_{t \in \ctreesparof{x}{n} }
    \bigsradd_{\substack{t' \in \dtreesparof{x}{n}, \\ \mathit{project}(t') = t}}
    \yieldof{t}\\
    &= \bigsradd_{t \in \ctreesparof{x}{n} } \yieldof{t}
    \ .
\end{align*}


\section{The Non-Idempotent Case}
One may ask whether the Munchausen sequence also converges to the least fixed point in the case when the underlying semiring is not idempotent. 
The proof of Theorem~\ref{Theorem:YieldDim} does not hold in the non-idempotent case: The trees of dimension lower than $2^{n+1}$ are summed up several times. (For example, a tree of dimension $2^{n}+1$ may occur in the sum as a list of trees of dimension $1$ plugged into a tree of dimension $2^n$ and as a list of trees of dimension $2^n$ plugged into a tree of dimension $1$).
This problem cannot be solved by considering decorated derivation trees as in~\cite{EsparzaKieferLuttenberger2010}.
Even if we distinguish derivation trees that have the same shape but were created using different rules, the sum might contain multiple occurrences of one decorated derivation tree.
Therefore, in the non-idempotent case, the convergence results for Newton iteration do not carry over to Munchausen iteration.

We demonstrate that indeed Munchausen iteration may compute values strictly larger than the least fixed point in the following example.

\begin{example}
    Consider the following system of equations over the commutative but not idempotent $\omega$-continuous semiring of natural numbers with infinity $(\nat \cup \set{\infty}, +, \cdot, 0,1 )$:
    \begin{align*}
        x &= y \cdot y\
        &y &= z
        &z &= 2
        \ .
    \end{align*}
    Applying the decomposition into the non-constant and the constant part, we may write it as $\vec x = \vec f + \vec a = (y \cdot y, z, 0) + (0, 0, 2)$.
    Its linear completion is
    \begin{align*}
        \completionof{\vecat{f}{x}} &= x + y \cdot y + z \cdot y + y \cdot z
        &\completionof{\vecat{f}{y}} &= y + z
        &\completionof{\vecat{f}{z}} &= z
        \ .
    \end{align*}
    Evaluating it at the vector of constants $(0, 0, 2)$ yields $\evalof{\vec a}{ \completionof{\vec f} } =  (0 , 2, 2)$.
    Plugging in the linear completion into itself to obtain the $\nth{1}$ Munchausen approximant yields
    \begin{align*}
        \munchat{1}{x} &= (x + y \cdot y + z \cdot y + y \cdot z) + (y + z) \cdot (y + z) + z \cdot (y + z) + (y + z) \cdot z\\
        \munchat{1}{y} &= (y + z) + z\\
        \munchat{1}{z} &= z \ .
    \end{align*}
    We get $\munchsit{1} = (12,4,2)$, which is already strictly larger than the least fixed point $(4,2,2)$.    
\end{example}

\end{document}